\documentclass[11pt]{extarticle}
\usepackage[english]{babel}
\usepackage{graphicx,charter}
\usepackage{authblk}
\usepackage{framed}
\usepackage[normalem]{ulem}
\usepackage{amsmath}
\usepackage[ruled,vlined,linesnumbered]{algorithm2e}
\usepackage{amsthm,thm-restate,thmtools}
\usepackage{amssymb}
\usepackage{amsfonts}
\usepackage{enumerate}
\usepackage{etoolbox}
\usepackage{subcaption}
\usepackage[utf8]{inputenc}
\usepackage[top=1 in,bottom=1in, left=1 in, right=1 in]{geometry}
\usepackage{xcolor,xfrac}
\usepackage{url}
\usepackage{enumitem}
\usepackage{tabularx}
\usepackage{algpseudocode}
\algdef{SE}[PROCEDURE]{Procedure}{EndProcedure}%
   [2]{\algorithmicprocedure\ \textproc{#1}\ifthenelse{\equal{#2}{}}{}{(#2)}}%
   {\algorithmicend\ \algorithmicprocedure}%

\DeclareMathOperator*{\argmax}{arg\,max}
\DeclareMathOperator*{\argmin}{arg\,min}

\renewcommand{\emptyset}{\varnothing}

\theoremstyle{definition}

\newtheorem{lemma}{Lemma}

\newtheorem{definition}{Definition}
\newtheorem{observation}{Observation}

\setlength{\columnseprule}{1 pt}

\newcommand{\nameAs}{\text{students }}
\newcommand{\nameA}{\text{student }}
\newcommand{\symbA}{\mathcal{S}}
\newcommand{\A}{s}
\newcommand{\nameBs}{\text{colleges }}
\newcommand{\nameB}{\text{college }}
\newcommand{\symbB}{\mathcal{C}}
\newcommand{\B}{c}

\title{On Achieving Fairness and Stability in Many-to-One Matchings}
\author[1]{Shivika Narang}
\author[2]{Arpita Biswas}
\author[1]{Y Narahari}
\affil[1]{Indian Institute of Science (shivika, narahari @iisc.ac.in)}
\affil[2]{Harvard University (arpitabiswas@seas.harvard.edu)}
\date{}

\begin{document}
\maketitle
\begin{abstract}
\noindent The past few years have seen a surge of work on fairness in  allocation problems where items must be fairly divided among agents having individual preferences. In comparison, fairness in  settings with preferences on both sides, that is, where agents have to be matched to other agents, has received much less attention. Moreover, two-sided matching literature has largely focused on ordinal preferences. This paper initiates the study of fairness in stable many-to-one matchings under cardinal valuations.  Motivated by real-world settings, we study leximin optimality over stable many-to-one matchings. We first investigate matching problems with {\em ranked valuations\/} where  all agents on each side have the same preference orders or rankings over the agents on the other side (but not necessarily the same valuations). Here, we provide a complete characterisation of the space of stable matchings. This leads to {\bf FaSt}, a novel and efficient algorithm to compute a leximin optimal stable matching under ranked \textit{isometric} valuations (where, for each pair of agents, the valuation of one agent for the other is the same). Building upon FaSt, we  present an efficient algorithm, {\bf FaSt-Gen}, that finds the leximin optimal stable matching for a more general ranked setting. We next establish that, in the absence of rankings and under strict preferences, finding a leximin optimal stable matching is NP-Hard. Further, with weak rankings, the  problem is strongly NP-Hard, even under isometric valuations.  In fact, when additivity and non-negativity are the only assumptions, we show that, unless P=NP, no efficient polynomial factor approximation is  possible.
\end{abstract}
\section{Introduction}


Fairness is a rather desirable objective in most practical situations. In the past decade, the computational problem of achieving fairness has been receiving intense attention \cite{budish2011combinatorial,bouveret2016characterizing,plaut2020almost,chen2020fairness,caragiannis2019stable,barman2019fair,barman2021sublinear}. Several fairness notions have been studied for fair allocation problems, where agents have preferences over a set of items which must be fairly distributed (allocated) among the agents. Here, items don't have preferences over the agent they are allocated to. In contrast, two-sided (bipartite) matching problems \cite{gale1962college, bogomolnaia2004random, caragiannis2019stable, narang2020study} consist of two groups of agents where agents in each group have preferences over the agents in the other group. The task is to match the agents of one group to the agents of another group. In such a setting, stability and fairness are clearly two very natural and desirable objectives. 

The various fairness notions studied in the fair allocation literature have been mostly  unexplored for two-sided matching settings, with the exception of some very recent work \cite{freemantwo,gollapudi2020almost} which does not take into account stability.
Also, the two-sided matching literature has largely assumed ordinal preferences. This paper initiates the study of achieving fairness in stable, two-sided matchings under cardinal valuations. Motivated by real-world situations, we choose leximin optimality as the notion of fairness and focus on stable many-to-one matchings. We derive several novel algorithmic and complexity results for various special cases of many-to-one matchings. We first present a motivating example and lay out the space of problems explored in this paper. 

\subsubsection*{Motivating Example}
We consider the case of engineering admissions in India, where, admissions to undergraduate engineering colleges are based on a centralized exam called the Joint Entrance Exam (JEE). In 2021, over 2 million students gave the JEE \cite{IITJEEStat}. The students, based on their performance, are awarded an All India Rank (AIR). The AIRs determine the order in which they get to choose the college they wish to join, via a centralized admissions process. In turn, colleges are ranked by their reputation, determined by factors such as employment secured by the alumni, research output, brand, etc. For students, the reputation of a college indicates how much being an alumnus of this college will further their future prospects. This ranking of colleges is the same for nearly all students (except for a few who may have a strong location preference), and is influenced by rankings produced by national newspapers and magazines. While we cite the Indian setting, central ranking procedures for college admissions are used by several countries including Brazil, Germany, Taiwan, and the UK \cite{machado2016centralized}.  

Stability is a key criterion for college admissions. Informally, stability in this context requires that there should not exist a college-student pair where both can benefit by deviating from the given matching by matching with each other. 
This is clearly a very natural criterion and has been shown to be effective. 
In fact, Baswana et al \cite{Baswana2019india} found that using a stable matching mechanism for engineering admissions in India eliminates some of the glaring inefficiencies that were observed when stability was not considered.

In practice, besides stability, fairness must also be satisfied. Newer colleges are invariably ranked lower than  well established colleges. As a result, despite the newer colleges having adequate capacity and competitive quality, they are ignored by students, particularly top ranking students. Matchings where the large majority of students are matched to well established (higher ranked) colleges may be stable but are clearly unfair to newer colleges. This results in poor reputation and a loss of opportunity for the newer colleges. Consequently, these colleges continue to be poorly ranked in future years as well. Hence, without some explicit intervention, this further increases the discrepancy among the colleges for the next set of students in following years. 

In essence, this creates a vicious self-reinforcing hierarchy. To offset this discrepancy, a suitable notion of fairness could be implemented in the centralized admission process to these colleges. If newer colleges get more students, which they deserve, they would have an opportunity to excel and improve their rankings. Naturally, in our attempt to help these colleges, we must not ignore the interests of the students. That is, in order to help colleges, we cannot be indiscriminate with the students and send a majority of ranking students to low ranking colleges. 
Fairness must be maintained for the students and the colleges, while also being efficient.

\subsubsection*{Ranked Valuations} Several practical many-to-one matching settings belong to the {\em ranked valuation\/} setting where there are inherent rankings across the agents on each side. The class of ranked valuations has been well studied in the context of fair division (under various names) \cite{amanatidis2017approximation,barman2020approximation,bouveret2016characterizing}. This class is a generalization of identical valuations, a very well studied class of valuations \cite{plaut2020almost,barman2018greedy,BarmanS20,chen2020fairness}.   In ranked valuations, for each side, all agents have the same preference orders or rankings over the agents on the other side (but may not have the same valuations). For example, in labour market settings, workers are ranked based on their experience, while employers may be ranked on the wages they offer.  
In the example being discussed,  the ranking of colleges is the same for nearly all students and the colleges have a single ranking over the students (based on AIRs) (except for a few who have a strong location preference). This ranking is usually influenced by ranking surveys and word-of-mouth communication. These rankings lead to a structure over the space of stable matchings.

\subsubsection*{Isometric Valuations}
We also investigate  the isometric valuations setting, where the valuation of a \nameA for a \nameB is the same as that of the \nameB for the student. This assumption captures several practical settings.  
Students in India choose to pursue technical degrees, in large part, with the expectation of a well paying job upon graduation. The jobs that these students secure, in turn, determine the reputation of their colleges, especially for newer colleges. Consequently, in this setting, the value that a student and a college gain from each other is the expected quality of future opportunity awaiting the student. Further, the salaries students receive are also influenced by the reputation of the colleges they have attended. As a result, the rankings play an important role. 
Finding fair and stable matchings is practically relevant even with isometric valuations.

\subsubsection*{Leximin Optimality}
While there is a wide variety of fairness notions prevalent in the literature,  given that they focus on allocation settings, the large majority of these need not always maintain fairness across the two sets of agents. The question of whether agents on both sides must be treated equally is up for debate, but in this paper we assume that agents from either side are equally important. To this end, we focus on a fairness notion called \textit{leximin optimality} ~\cite{bezakova2005allocating, plaut2020almost,flaniganfair,flanigan2021fair} as our notion of fairness. Along with stability, this ensures a certain degree  of efficiency.
Informally, a leximin optimal matching is  one that maximizes the value of the worst-off agent (i.e. satisfies the maximin/santa claus criteria), and out of the matchings that achieve this, maximizes the value of the second worst-off agent, and so on. Essentially, this minimizes the discrepancy in the values achieved by each of the agents on either side, ensuring a fair balance in the interests of both sides.  For college admissions, leximin optimality improves values achieved by lower ranked students and  colleges with minimal compromise in  the matchings of higher ranked colleges and students. 

Another appealing aspect of the leximin notion is that it is an optimization based fairness notion with a guaranteed optimal outcome over stable matchings. While notions like egalitarian welfare (maximin welfare) and Nash social welfare are also guaranteed to exist, they are not sufficient for ensuring fairness for agents on both sides. Other popular fairness notions like envy-freeness and equitability and their relaxations  need not coexist with stability. As a result, leximin turns out to be an appealing fairness notion for two-sided matchings. We defer a comprehensive discussion on how these  fairness notions interact with stability to Appendix \ref{sec:envy}. 

\subsubsection*{Contributions}
In this paper, we seek to find a leximin optimal stable many-to-one matching with cardinal valuations. 
Table \ref{tab:contr} summarises our contributions.  Here, $m$  and $n$ are the number of \nameBs and students. Strict preferences mean that no agent is indifferent about any agent on the other side (each agent's valuations imply a strict linear order over agents on the other side).  Weak preferences allow agents to be indifferent between two agents (their valuations imply a partial order on the agents on the other side). 
\begin{table}[ht]
    \centering
    {\small
    \begin{tabular}{|p{0.32\linewidth}|p{0.27\linewidth}|p{0.28\linewidth}|}
        \hline
        \textbf{Nature of Valuations}    & \textbf{Strict Preferences}  & \textbf{Weak Preferences}\\
            \hline
        
        Ranked + Isometric & $O(mn)$ (Thm \ref{thm:main}) & Strongly NP-Hard (Thm \ref{lem:lexhard})\\
        
        Ranked Valuations & $O(m^2n^2) $ (Thm \ref{thm:algleft})& APX-Hard (Thm \ref{thm:apxhard})\\
        
        General & NP-Hard (Thm \ref{lem:stricthard}) & APX-Hard (Thm \ref{thm:apxhard}) \\
        General with $m=2$ & $O(n^2)$ (Thm \ref{thm:algconst}) & NP-Hard (Thm \ref{lem:lexhard}) \\
        \hline
        
    \end{tabular}}
    \caption{Summary of Results}
    \label{tab:contr}
    \vspace{-5mm}
\end{table}

Our algorithms all have a similar pattern, which only requires a way of iterating over the space of stable matchings. We believe this introduces a new approach which can inform future work on optimizing over the space of matchings.

\section{Preliminaries and Main Results}
\label{sec:prelim}

We first set up our model,  give the necessary definitions,  discuss the relevant work, and state the main results. 
\subsection{Definitions and Notation}
Let $\symbA = \{\A_1, \ldots, \A_n\}$ and $\symbB = \{\B_1, \ldots, \B_m\}$ be non-empty, finite, and ordered sets of \nameAs  and colleges, respectively. We assume that there are at least as many \nameAs as colleges, that is, $n\geq m$. We shall assume that each \nameB has a capacity (or budget) $b_j\leq n$ on the maximum number of \nameAs that can be matched to it.

Let $u_{i}(\cdot)$ and $v_{j}(\cdot)$ be the valuation functions of \nameA $\A_i$, and \nameB  $\B_j$, $i\in [n],\,j\in [m]$. This paper studies only non-negative and additive (for colleges) valuation functions. We define $U=(u_1,\cdots,\, u_n)$ and $V=(v_1,\cdots,v_m)$ and $B=(b_1,\cdots, b_n)$. Hence, an instance of stable many-to-one matchings (SMO) is captured by the tuple $I=\langle \symbA,\symbB, U$, $ V, B \rangle$. The ordering implied by an agent's valuations over agents from the other set are called preferences. Strict preferences require that the preference order is strict, or in other words, there for any agent, there are no two agents in the other set for whom they have equal value. Weak preferences imply that there may be ties in the preference order.

We shall also look at isometric valuations where the value that a student $s_i$ has for a college $c_j$ is the same as $c_j$'s value for $s_i$, that is,  $u_i(\B_j)=v_j(\A_i)$, for all $i\in [n]$ and all $j\in [m]$. Each instance with isometric valuations can be captured by the tuple $I=\langle \symbA,\symbB, V,B\rangle$. Here $V$ can equivalently represent an $n\times m$ matrix where $V_{ij}$ is the valuation of $\A_i$ for matching with $\B_j$, or $V_{ij}=u_{i}(\B_j)=v_{j}(\A_i)$. 

Rankings or ranked valuations imply that for all students, while the exact values need not be the same, the (strict) ordering over colleges is the same and analogously, each college has the same preference order over students. To this end, for convenience we assume that whenever the valuations are ranked, $u_{i}(\B_1)>u_{i}(\B_2)>\cdots >u_{i}(\B_m)$ for all $i\in [n]$ and $v_{j}(\A_1)>v_{j}(\A_2)>\cdots> v_{j}(\A_n)$ for all $j\in [m]$. Weak rankings shall allow for ties, $u_{i}(\B_1)\geq u_{i}(\B_2)\geq \cdots \geq u_{i}(\B_m)$ for all $i\in [n]$ and $v_{j}(\A_1)\geq v_{j}(\A_2)\geq \cdots\geq  v_{j}(\A_n)$ for all $j\in [m]$.. Thus, for ranked isometric valuations, each row and column of $V$ are sorted in decreasing order.

Our goal is to find a many-to-one matching $\mu$ of the bipartite graph $G=(\symbA, \symbB, \symbA\times\symbB)$ such that $\mu$ satisfies stability as well as fairness properties. A matching $\mu \subseteq \symbA\times\symbB$ is a subset of $\symbA \times \symbB$ such that each \nameA has at most one incident edge present in the matching, and the number of incident edges on a \nameB is \textit{at most} their capacity $b_j$, i.e., for any $i\in [n]$ and $j\in [m],\,\, |\mu(\A_i)|\leq 1$ and $|\mu(\B_j)|\leq b_j$. 

Alternatively, matching $\mu$ can be equivalently defined as a function that maps an agent to a set of agents, $\mu:\symbA\cup \symbB \mapsto 2^{\symbA}\cup \symbB$ such that for each \nameA $\A_i\in \symbA$, the function satisfies $\mu(\A_i)\in \symbB$ and $|\mu(\A_i)|\leq 1$, and for each \nameB $\B_j\in\symbB$, the function satisfies $\mu(\B_j)\subseteq \symbA$ and $|\mu(\B_j)|\leq b_j$ s.t. for each $i\in [n]$, $\A_i\in\mu(\mu(\A_i))$ and for each $j\in [m]$, $\B_j=\mu(p)$ for all $p\in \mu(\B_j)$. The valuation of a \nameA $\A_i$ under a matching $\mu$ is written as $u_i(\mu)=u_i(\mu(\A_i))$. For a \nameB $\B_j$ the valuation under matching $\mu$ is defined as $v_{j}(\mu)=\sum_{i\in \mu(c_j)}\ v_{j}(i)\geq 0.$ 
Next, we define two desirable properties of a matching, namely, stability and leximin optimality. 
\begin{definition}[Stable Matching]
A matching $\mu$ of instance $I=\langle \symbA,\symbB, V\rangle$ is said to be stable if no $(\A_i,\B_j)$ is a {\em blocking pair} for $\mu$.
\end{definition}

\begin{definition}[Blocking Pair]
Given a matching $\mu$, $(\A_i,\B_j)$ are called a {\em blocking pair} if $\A_i\notin \mu(\B_j)$ and there exists $\A_{i'}\in \mu(\B_j)$, $\B_{j'}=\mu(\A_i)$ such that $v_j(\A_i)>v_j(\A_{i'})$ and $u_i(\B_j)>u_{i}(\B_{j'})$. 
That is $\A_i$ and $\B_j$ prefer each other to (one of) their partners under $\mu$.
\end{definition}
\noindent Note that this definition of stability has also been called pairwise stability and justified envy-freeness in prior work. Stability often requires a non-wastefulness constraint as well. However, prior work \cite{zhang2018strategyproof,yahiro2018strategyproof} has found that non-wastefulness and pairwise stability are incompatible under distributional constraints, aimed at ensuring fairness. Thus, we do not demand our stable matchings to be non-wasteful. We denote the space of complete stable matchings, matchings which are stable and do not leave any agent unmatched, as $\mathcal{S}_C(I)$.

Our work aims to find the leximin optimal over the space of stable matchings. The most convenient way of identifying such a matching is using leximin tuples. The leximin tuple of any matching is simply the tuple containing the valuations of all the agents (\nameAs and colleges) under this matching, listed in non-decreasing order. Hence, the position of an agent's valuation in the leximin tuple may change under different matchings. The leximin tuple of a matching $\mu$ will be denoted by $\mathcal{L}_{\mu}$. The $t^{\text{th}}$ index of $\mathcal{L}_{\mu}$ is denoted by $\mathcal{L}_{\mu}[t]$.

\begin{definition}[Leximin Domination]
We say that matching $\mu_1$ leximin dominates $\mu_2$ if there exists a valid index $k$ such that $\mathcal{L}_{\mu_1}[k']=\mathcal{L}_{\mu_2}[k']$ for all $k'<k$ and $\mathcal{L}_{\mu_1}[k]>\mathcal{L}_{\mu_2}[k]$.
\end{definition}
\noindent We shall say that the leximin value of $\mu_1$ is greater than that of $\mu_2$ if $\mu_1$ leximin dominates $\mu_2$. This shall be denoted by $\mathcal{L}_{\mu_1} >\mathcal{L}_{\mu_2}.$
\begin{definition}[Leximin Optimal]
A leximin optimal matching $\mu^*$ is one that is not leximin dominated by another matching. 
\end{definition}
\noindent Essentially, we wish to find a matching that maximizes the left most value in the leximin tuple, of those that do, find the one that maximizes the second value and so on, which is NP-Hard, in general. Our goal is to find a {\em leximin optimal over all stable matchings}.

\subsection{Related Work}
Stable matchings and fairness have been studied somewhat independently for decades in the social choice literature. The formal study of the stable matching problem  began with Gale and Shapley's seminal paper \cite{gale1962college}. 
Their work initiated decades of research on both the theory \cite{roth1982economics,gale1985some,sethuraman2006many,caragiannis2019stable,narang2020study} and applications\cite{roth1989college,Baswana2019india,gonczarowski2019matching,roth2005kidney} of the stable matchings. This literature largely focuses on the structure of the stable matchings space\cite{roth1993stable,teo1998geometry}, incentive compatibility\cite{roth1982economics,narang2020study}, manipulating the Gale-Shapley mechanism \cite{teo2001gale,vaish2017manipulating} and deploying stable matchings in practice. Our work builds on this space by focusing simultaneously on fairness and stability in many-to-one matchings.
Important applications of stable matchings have initiated large bodies of work including college admissions, also known as school choice~\cite{roth1989college,biro2010college}, matching residents to hospitals~\cite{kavitha2004strongly,aldershof1996stable,klaus2005stable,irving2000hospitals,irving2003strong}  and kidney exchange\cite{roth2005kidney,roth2007efficient,biro2008three,irving2007cycle}. 
While this paper is inspired from the setting described in the introduction, it is relevant across the various applications of many-to-one matchings.

\subsubsection{Fairness in Matching Problems}
Fairness in matching settings  has often been defined from context-specific angles, such as college admissions \cite{zhang2018strategyproof,yahiro2018strategyproof,nguyen2019stable,lien2017ex}, focusing only on the colleges and not on the students. Our work looks at fairness from a broader perspective.  Some prior literature has focused on combating the inherent bias towards the proposing side in the Gale Shapley algorithm \cite{sethuraman2006many,klaus2009fair,huang2016fair,brilliantova2022fair}. There  is also some work on procedural fairness of the matching algorithms \cite{klaus2006procedurally,tziavelis2020fair}. 

However, these notions of fairness can be seen as group fairness notions and consider the two groups of agents, rather each agent individually. Further, matching literature has almost exclusively considered settings with ordinal preferences, and looks for ordinal fairness notions. We consider cardinal valuations and adopt \textit{leximin optimality} from {fair allocation} literature. Some very recent work has looked at envy-based fairness in many-to-many matchings \cite{freemantwo, gollapudi2020almost}, in contrast to our many-to-one matching setting, and does not require stability.
It is also important to note that the notion of justified envy-freeness studied in prior work \cite{wu2018lattice,aziz2019matching,yokoi2020envy} is different from the envy-freeness notion studied in allocation literature. Justified envy-free means no blocking pairs, and is also called pairwise stability or just, stability, as in this paper.
%
To avoid confusion with the fair allocation concept, we define stability as the absence of blocking pairs. 

\subsubsection*{Fairness in Allocation Problems}
 Fair allocation  refers to the problem of fairly dividing a set of items among a set of agents with known cardinal valuations. The fairness notions in allocation problems, unlike matching problems, are defined with respect to the agents only (and not the items). In this work, we adopt leximin optimality to define a two-sided leximin optimal in the matching scenario. 
A leximin optimal solution always exists and under, additive valuations, is Pareto Optimal.  The hardness of finding the leximin optimal allocation was established in \cite{bezakova2005allocating, plaut2020almost}. To the best of our knowledge, approximations to leximin have not been defined. Bezakova and Dani \cite{bezakova2005allocating} establish the APX-Hardness, for any factor better than $\sfrac{1}{2}$, of maximizing egalitarian welfare, or the Santa Claus problem. Here we aim to maximize the minimum value of any agent. Clearly, the leximin optimal does this and hence the hardness result extends to leximin. 

An important assumption for our algorithms to be able to bypass the hardness is rankings. Prior fair division literature has called such preferences as same order preferences \cite{bouveret2016characterizing}, fully correlated valuation functions\cite{amanatidis2017approximation} or instances with such preferences as ordered instances \cite{barman2020approximation,garg2020improved}. Such preferences also generalize other well studied subclasses of preferences like identical valuations \cite{barman2018greedy,barman2020uniform} and single parameter environments \cite{barman2019fair}. To the best of our knowledge such settings have not previously been studied for leximin.

In special case of dichotomous valuations \cite{bogomolnaia2004random,kurokawa2015leximin}, leximin optimality can be achieved in polynomial time. 
Bogomolnaia and Moulin \cite{bogomolnaia2004random} show that the maximum weight matching is stable under dichotomous preferences and satisfies a variety of properties including leximin optimality. 
Recent results by Benabbou et al.~\cite{benabbou2020finding} and Chen and Liu~\cite{chen2020fairness} study the properties of a leximin optimal allocation for restricted settings. 
To the best of our knowledge, our work is the first one to consider leximin optimality for the matching problem, without restricting to dichotomous valuations. Leximin has been used to ensure fairness in sortition \cite{flaniganfair,flanigan2021fair} and shows  significant improvement on the systems currently in use. 

\subsection{Overview of Main Results}
The goal of this paper is to find a stable and leximin optimal  many-to-one matching. Leximin is intractable in general, so we first look at ranked valuations, where the space of stable matchings has an appealing structure that we can exploit.
\begin{restatable}{lemma}{rankstruct}\label{lem:struct}
Given an instance of ranked  valuations, a matching $\mu$ is stable if and only if, for all $j\in[m]$,  $\mu(\B_j)=\{\A_{w_j +1},\cdots, \A_{w_j+k_j}\}$ where $k_j=|\mu(\B_j)|$ and $ w_j=\sum_{t=1}^{j-1} k_t.$
\end{restatable}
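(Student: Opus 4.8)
The plan is to prove both implications directly from the definition of a blocking pair, using the single fact that ranked valuations make a lower index correspond to a higher value on both sides: for any $\B_j$ we have $v_j(\A_i)>v_j(\A_{i'})$ iff $i<i'$, and for any $\A_i$ we have $u_i(\B_j)>u_i(\B_{j'})$ iff $j<j'$. The backward direction (the stated block form implies stability) is a short index computation, whereas the forward direction (stability forces the block form) is the substantive part, and I would handle it by induction on the number of colleges $m$.

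For the backward direction, suppose $\mu$ has the stated form but admits a blocking pair $(\A_i,\B_j)$. By definition there is $\A_{i'}\in\mu(\B_j)$ with $v_j(\A_i)>v_j(\A_{i'})$, hence $i<i'$, and $\B_{j'}=\mu(\A_i)$ with $u_i(\B_j)>u_i(\B_{j'})$, hence $j<j'$. The block form gives $\A_{i'}\in\mu(\B_j)\Rightarrow i'\le w_j+k_j=w_{j+1}$ and $\A_i\in\mu(\B_{j'})\Rightarrow i\ge w_{j'}+1$. Since $j+1\le j'$, we have $w_{j'}=\sum_{t=1}^{j'-1}k_t\ge\sum_{t=1}^{j}k_t=w_{j+1}$, so $i\ge w_{j+1}+1>w_{j+1}\ge i'$, contradicting $i<i'$. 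Hence no blocking pair exists and $\mu$ is stable.

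For the forward direction I would argue by induction on $m$, the key observation being that $\B_1$ is the universally most preferred college, so every student not matched to it strictly prefers it to its own assignment (or to being unmatched). Writing $k_1=|\mu(\B_1)|$, if $\mu(\B_1)\neq\{\A_1,\ldots,\A_{k_1}\}$ then, since this set has size $k_1$, it contains some $\A_{i'}$ with $i'>k_1$ and omits some $\A_i$ with $i\le k_1<i'$; then $\B_1$ prefers $\A_i$ to $\A_{i'}$ and $\A_i$ prefers $\B_1$ to its current match, so $(\A_i,\B_1)$ blocks, a contradiction. Thus $\mu(\B_1)=\{\A_1,\ldots,\A_{k_1}\}$. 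Deleting $\B_1$ together with $\A_1,\ldots,\A_{k_1}$ leaves a ranked instance on $\{\B_2,\ldots,\B_m\}$ and $\{\A_{k_1+1},\ldots,\A_n\}$ on which the restriction of $\mu$ is again stable, because any blocking pair there uses only surviving agents and the valuations are unchanged, so it would already block in the original. Applying the induction hypothesis yields the block form for $\B_2,\ldots,\B_m$, with the offsets $w_j$ continuing from $k_1$.

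The step I expect to require the most care is the treatment of unmatched students in the forward direction: the stated blocking-pair definition is phrased for a student that already has a partner $\B_{j'}=\mu(\A_i)$, so I would first fix the convention that an unmatched student strictly prefers every college (reading the vacuous comparison as a strict preference for being matched). With that convention the prefix property — that the matched students are exactly $\A_1,\ldots,\A_K$ for $K=\sum_j k_j$, with no higher-ranked student left unmatched while a lower-ranked one is placed — falls out of the same $\B_1$ argument at each stage of the induction. The only remaining obstacle is bookkeeping: verifying that the cumulative offsets $w_j$ in the sub-instance coincide with those in the original, which is immediate since exactly $k_1$ students are removed at the first step.
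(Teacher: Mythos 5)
Your proof is correct, and its core matches the paper's: your backward direction is the same argument as the paper's, just phrased as an index computation (a blocking pair $(\A_i,\B_j)$ forces $i<i'$ and $j<j'$, which the block structure makes impossible), and your induction on $m$ by deleting $\B_1$ together with $\mu(\B_1)$ is the paper's induction over the colleges $\B_1,\ldots,\B_m$ recast in sub-instance form --- the one extra obligation this creates, namely that the restriction of $\mu$ remains stable because any blocking pair among surviving agents lifts to the original instance, you discharge correctly. Where you genuinely go beyond the paper is the treatment of unmatched students, and you were right to flag it as the delicate point: the paper's definition of a blocking pair requires $\B_{j'}=\mu(\A_i)$ to exist, so read literally an unmatched student can never block, and the forward direction of the lemma would then be false as stated (e.g.\ $\mu(\B_1)=\{\A_2\}$ with $\A_1$ left unmatched admits no blocking pair under the literal definition yet violates the block form). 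The paper's proof silently assumes the displaced student is matched when it writes $\mu(\A_i)=\B_j$ for some $j>1$; this is harmless in context because the lemma is only ever applied to complete matchings (the paper works with $\mathcal{S}_C(I)$, and Observation~\ref{obs:non-empty} rules out unmatched agents at the leximin optimum), but your convention that an unmatched student strictly prefers every college makes the statement true verbatim for incomplete matchings as well, at the cost of mildly amending the paper's blocking-pair definition. Either fix --- your convention, or an explicit restriction to complete matchings --- is needed for the lemma as literally stated, so your version is, if anything, slightly more careful than the paper's.
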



Lemma~\ref{lem:struct} ensures that a stable solution for a ranked instance would necessarily match a contiguous set of \nameAs to each $\B_j$. We exploit this in the algorithm FaSt (Algorithm~\ref{alg:fast}), which runs in time $O(mn)$. Its correctness is established in the following theorem.

\begin{restatable}{theorem}{algcorrect}\label{thm:main}
FaSt (Algorithm \ref{alg:fast}) finds a leximin optimal stable matching for ranked isometric valuations in time $O(mn)$.
\end{restatable}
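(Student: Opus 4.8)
The plan is to prove two things: that FaSt always outputs a matching that is (a) stable and (b) leximin optimal, and that it does so in time $O(mn)$. Since I have not seen the pseudocode of Algorithm~\ref{alg:fast}, I will reason about what such an algorithm must do, guided by the structural Lemma~\ref{lem:struct}, which is the real engine here. That lemma tells us that \emph{every} stable matching assigns to the colleges $\B_1,\dots,\B_m$ (in order) contiguous blocks of students $\A_1,\dots,\A_n$ (in order). So a stable matching is \emph{completely determined} by the block sizes $(k_1,\dots,k_m)$, equivalently by the $m-1$ ``cut points'' $w_2 < w_3 < \cdots < w_m$ that separate the blocks along the ordered list of students. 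The search space of stable matchings is therefore a combinatorial object of manageable size, and FaSt is presumably walking through it in a disciplined way.

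Let me sketch the logic.

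\medskip
\textbf{Step 1 (Characterising the search space via Lemma~\ref{lem:struct}).}
First I would invoke Lemma~\ref{lem:struct} to reduce the problem to choosing the block boundaries. Under ranked isometric valuations, the value college $\B_j$ gets from the student $\A_i$ assigned to it equals $V_{ij}$, and both rows and columns of $V$ are sorted in decreasing order. I would observe that this monotonicity forces a natural ``tension'': assigning the top students to the top colleges maximises the strong agents' values but crushes the weak ones, so the leximin optimum will push a contiguous set of reasonably-ranked students down to the lower-ranked colleges. The key qualitative fact I expect to need is that moving a student from a higher-indexed college to a lower-indexed college (i.e.\ shifting a cut point) changes the valuations of exactly the two affected colleges and of the moved students in a predictable, monotone way.

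\medskip
\textbf{Step 2 (Loop invariant / correctness of the greedy move).}
The heart of the argument will be an exchange/invariant argument showing that FaSt's update rule never passes over the leximin optimum. Concretely, I would set up a loop invariant asserting that after each iteration, among all stable matchings consistent with the commitments made so far, the current matching is leximin-undominated; and that whichever agent is currently worst-off cannot have its value improved without strictly hurting some agent who is already at least as badly off in the optimum. I would argue that each step of the algorithm identifies the current worst-off agent and makes the \emph{unique} move (a block-boundary shift justified by Lemma~\ref{lem:struct}) that can raise that agent's value, and that this move is safe precisely because isometry and the ranked structure guarantee the move only ever helps a lower-ranked college at the expense of re-routing students in a controlled way. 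Termination plus the invariant then yields that the output is leximin optimal among stable matchings.

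\medskip
\textbf{Step 3 (Running time).}
For the $O(mn)$ bound I would argue that the space of cut-point configurations FaSt visits forms a monotone sequence — each of the $m$ boundaries moves in one direction only across the $n$ student positions — so the total number of boundary shifts is $O(n)$, and each shift requires $O(1)$ or amortised $O(1)$ work to update the affected valuations and re-identify the worst-off agent (using the sorted structure of $V$ to avoid re-scanning). Summed over the $m$ colleges this gives $O(mn)$; I would be careful to check that maintaining and comparing the leximin tuples does not secretly cost an extra factor.

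\medskip
\textbf{The main obstacle} I anticipate is Step~2: proving that the greedy boundary-shift is \emph{globally} leximin-safe, not merely locally improving. The danger is the usual leximin subtlety — a move that helps the current worst-off agent might simultaneously drag a previously-comfortable agent below some other agent, reshuffling the sorted tuple in a way that is hard to control. I would handle this by leaning hard on isometry and rankedness to prove a monotonicity/single-crossing property: that along FaSt's trajectory the agents' values change in a well-ordered fashion, so that once an agent is ``fixed'' at its optimal value it is never disturbed, and the worst-off position advances monotonically. Establishing this structural monotonicity cleanly — rather than by brute-force case analysis on which agent sits where in $\mathcal{L}_\mu$ — is where the real work lies.
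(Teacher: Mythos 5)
Your overall architecture---invoke Lemma~\ref{lem:struct} to collapse the space of stable matchings to contiguous blocks with cut points, then fix agents greedily while maintaining a ``commitments made so far are leximin-correct'' invariant---is exactly the skeleton of the paper's proof, and your Steps 1 and 3 are essentially right. But you leave your own Step~2 open, and that is precisely where the content of the theorem lives: the ``monotonicity/single-crossing property'' you hope for is named but never derived, so what you have is a plan rather than a proof. The paper closes this gap with two isometry-specific facts that your proposal never states: Observation~\ref{obs:order} (under ranked isometric valuations, in \emph{every} complete stable matching the students' values appear in the leximin tuple in order of rank, i.e.\ $i<i'$ implies $u_{i'}(\mu)<u_i(\mu)$) and Observation~\ref{obs:fixed} ($v_j(\mu)\geq u_i(\mu)$ for every $\A_i\in\mu(\B_j)$). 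Together these pin down, independently of all future decisions, which leximin positions are occupied by the already-fixed agents $\A_{t+1},\dots,\A_n$ and the already-fixed colleges, so the algorithm genuinely optimizes one tuple position at a time and a fixed agent can never be disturbed later. Crucially, this is a consequence of isometry and not of rankedness alone: the paper points out in Section~\ref{sec:beyond} that once Observations~\ref{obs:order} and~\ref{obs:fixed} fail (general ranked valuations), this greedy breaks and a different algorithm, FaSt-Gen, is required. Any correct proof of Theorem~\ref{thm:main} must therefore use isometry quantitatively, whereas your sketch only ``leans on'' it rhetorically.

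A second concrete omission is the tie case. Your invariant asserts that each step makes ``the \emph{unique} move'' that can raise the worst-off agent's value, but when $V_{td}$ equals $\B_d$'s current valuation no such unique locally-justified move exists: demoting $\A_t$ may produce a temporarily leximin-inferior or leximin-equal tuple that is only vindicated (or refuted) by sending \emph{further} students to $\B_d$. The paper handles this with an explicit lookahead subroutine that tentatively demotes successive students and compares subsequent leximin positions before committing; a purely local greedy comparison, which is all your invariant licenses, would fix agents incorrectly on such instances. On running time your instincts are fine and your worry about leximin-tuple maintenance is well placed but secondary---the paper charges $O(1)$ per student--college pair considered and $O(n+m)$ per update of the tuple and position array, which matches your amortized accounting in spirit.
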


Building on this algorithm, we give an algorithm for general ranked valuations FaSt-Gen (Algorithm \ref{alg:left}) which runs in time $O(m^2n^2)$. We prove its correctness in the following theorem.

\begin{restatable}{theorem}{alggen}\label{thm:algleft}
FaSt-Gen (Algorithm \ref{alg:left}) finds a leximin optimal stable matching given an instance of general ranked valuations.
\end{restatable}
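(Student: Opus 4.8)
The plan is to use Lemma~\ref{lem:struct} to collapse the search over stable matchings into a search over a simple combinatorial object. By that lemma, every complete stable matching is determined by its vector of block sizes $(k_1,\dots,k_m)$, where $k_j=|\mu(\B_j)|$, subject to $\sum_{j} k_j=n$ and $1\le k_j\le b_j$, and conversely every such vector induces a stable matching. Consequently FaSt-Gen is really walking through the set of compositions of $n$ into $m$ bounded parts, and the first thing I would verify is that every intermediate object the algorithm maintains is of this block form, so that stability holds automatically at every step. Under this encoding each agent's value is an explicit function of the block vector: student $\A_i$ receives $u_i(\B_j)$ for the unique $j$ whose block contains index $i$, and college $\B_j$ receives $\sum_{i\in\mu(\B_j)} v_j(\A_i)$, which lets me reason about $\mathcal{L}_\mu$ directly in terms of $(k_1,\dots,k_m)$.

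Next I would isolate the elementary move and its monotone effect. Shifting the boundary between two consecutive colleges $\B_j$ and $\B_{j+1}$ downward, i.e.\ reassigning the lowest-ranked student currently in $\B_j$ to the top of $\B_{j+1}$, strictly decreases the value of that one student and of $\B_j$, and strictly increases the value of $\B_{j+1}$ (which gains a student better than any it held). Thus a single move transfers welfare from one student and a higher college to a lower college. I would then confirm that FaSt-Gen begins from the student-optimal extreme, in which the top colleges are loaded maximally and the low-ranked colleges are the worst-off agents, and repeatedly performs the boundary shift that best lifts the current worst-off agent while not pushing any agent below the value already secured at the bottom of $\mathcal{L}_\mu$. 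The key local lemma is that each such shift strictly raises $\mathcal{L}_\mu$ in the leximin order, which I would prove by comparing only the three coordinates the shift changes; combined with finiteness of the composition space this yields termination.

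The crux is the optimality of the stopping rule, which I would establish by an exchange argument. Suppose the output $\mu$ were leximin-dominated by a stable matching $\mu^*$ with block vectors $\mathbf{k}$ and $\mathbf{k}^*$, and let $t$ be the first index with $\mathcal{L}_{\mu^*}[t]>\mathcal{L}_{\mu}[t]$, achieved by an agent $a$ sitting at position $t$ of $\mathcal{L}_\mu$. Since all agents below position $t$ carry equal value in the two tuples, I would localise the discrepancy between $\mathbf{k}$ and $\mathbf{k}^*$ to a set of unit boundary shifts and show that at least one of them is a move available to FaSt-Gen at $\mu$ that raises $a$ without harming any agent weakly worse than $a$, contradicting the fact that the algorithm stopped. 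The main obstacle is exactly this localisation: because a boundary shift helps a lower college but simultaneously hurts a student and a higher college, the worst-off agent $a$ may sit on either side, and one must rule out that every way of raising $a$ drives some other already-critical agent below $\mathcal{L}_\mu[t]$. I expect to handle this by ordering the unit shifts that carry $\mathbf{k}$ toward $\mathbf{k}^*$ so that every intermediate matching stays stable and its leximin tuple is non-decreasing, thereby reducing the global claim to the single-step monotonicity established above, while leaning on the structural analysis behind Theorem~\ref{thm:main} for the backbone of the comparison.
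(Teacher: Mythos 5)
Your setup is sound and matches the paper's foundation: Lemma~\ref{lem:struct} does reduce the space of complete stable matchings to block vectors $(k_1,\dots,k_m)$, every configuration FaSt-Gen maintains is of this form (so stability is automatic), and the Demote step decomposes into adjacent boundary shifts with exactly the sign pattern you describe. But the crux of your argument has a genuine gap: the claim that the unit shifts carrying $\mathbf{k}$ toward $\mathbf{k}^*$ can always be ordered so that every intermediate leximin tuple is non-decreasing is false, and the counterexamples are precisely the situations the algorithm's LookAheadRoutine and ``soft fix'' machinery exist to handle. Concretely, suppose reaching $\mu^*$ requires moving two students into $\B_{down}$, where after any \emph{single} demotion the demoted student's value falls to the former value of $\B_{down}$ while $\B_{down}$'s new value is still below that student's former value: then every possible first unit shift strictly decreases the leximin tuple, even though the matching after both shifts dominates $\mu$. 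Since any ordering must begin with some unit shift, no monotone path exists, and your reduction of the global exchange argument to single-step monotonicity collapses exactly where the problem is hard. Relatedly, your termination argument (``each shift strictly raises $\mathcal{L}_\mu$, plus finiteness'') misdescribes the procedure: FaSt-Gen accepts weak improvements ($\mathcal{L}_{\mu'}\geq\mathcal{L}_\mu$) and, inside the look-ahead, deliberately traverses leximin-decreasing intermediates; the paper instead obtains termination from the monotone growth of the $UpperFix$/$LowerFix$ sets, with each student--college pair considered at most $m-1$ times.

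The paper's actual proof avoids path arguments entirely: it inducts on the rank $t$ of the colleges, letting $l_t$ and $h_t$ denote the lowest- and highest-ranked students matched to $\B_t$, and shows that the algorithm fixes each college's upper and lower limits exactly as in a leximin optimal matching. The base case handles $\B_1$ (its upper limit is forced by Observation~\ref{obs:first-last}, and its lower limit is fixed only when any further demotion is provably leximin inferior restricted to the affected agents); the inductive step splits into three cases ($t=m$; $t<m$ with $up$ never reaching $t$, where the fix happens at Step~\ref{eq:a}; and $up=t$ at some point), in each case arguing that deviating from the fixed boundary is leximin inferior with respect to the matchings of the first $t+1$ colleges and the students up to $\A_{l_{t+1}}$. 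If you want to salvage your exchange-style approach, you would need to replace monotone-path connectivity with an argument that compares $\mu$ and $\mu^*$ boundary by boundary from $\B_1$ downward --- which is, in effect, what the paper's induction does.
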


These algorithms assume that the capacity of each \nameB is $n-1$ for ease of presentation. If not, the algorithms require additional checks for capacity constraints during initialization and switches, but the efficiency remains the same. The complete algorithms are given in Appendix \ref{sec:cap}. We now consider a setting with only two colleges and strict preferences. We find that here a polynomial time algorithm to find the leximin optimal stable matching does exist.

\begin{restatable}{theorem}{algconst}\label{thm:algconst}
FaSt-Const (Algorithm \ref{alg:m2}) finds a leximin optimal matching given an instance with strict preferences and $m=2$.
\end{restatable}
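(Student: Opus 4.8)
The plan is to mirror the structure used for the ranked cases: first characterise the complete stable matchings when $m=2$, then show that FaSt-Const walks through this structured family while maintaining stability, and finally argue leximin optimality by an exchange argument. The first step is a stability characterisation analogous to Lemma~\ref{lem:struct}. Since preferences are strict and there are only two colleges, each student strictly prefers either $c_1$ or $c_2$; call these groups $\symbA_1$ and $\symbA_2$. A student matched to its preferred college can never belong to a blocking pair, so in a complete matching $\mu$ the only candidate blocking pairs involve a \emph{misplaced} student. Unwinding the blocking-pair definition, I would show $\mu$ is stable iff (i) every student of $\symbA_1$ assigned to $c_2$ is ranked by $c_1$ below every student of $\mu(c_1)$, and (ii) every student of $\symbA_2$ assigned to $c_1$ is ranked by $c_2$ below every student of $\mu(c_2)$. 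This is a threshold condition on each college's preference list, and it shows that the complete stable matchings form a small, structured family in which crossing a single student over the $c_1/c_2$ boundary preserves or breaks stability in a predictable way.

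Second, I would establish that FaSt-Const maintains stability as a loop invariant. Starting from a canonical endpoint of this family (for instance the matching sending each student to its top choice subject to the thresholds above), the algorithm performs a sequence of single-student reassignments across the boundary; using the characterisation I would certify that each reassignment the algorithm actually performs lands on another stable matching, i.e.\ creates no blocking pair.

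The crux is leximin optimality, and this is where I expect the main obstacle. Reassigning one student simultaneously changes three entries of the leximin tuple $\mathcal{L}_\mu$ (the moved student's own valuation and, because college valuations are additive, the valuations of both $c_1$ and $c_2$), so a single switch perturbs the sorted tuple non-locally and the naive ``lift the worst-off agent'' greedy step is not obviously safe. I would argue that along the structured traversal the lexicographically-least prefix of $\mathcal{L}_\mu$ evolves monotonically, so that the algorithm can first pin down the stable matchings maximising $\mathcal{L}_\mu[1]$, then among those the ones maximising $\mathcal{L}_\mu[2]$, and so on, and that every stable matching not visited is leximin-dominated by one that is. The bulk of the work will be a case analysis showing that no alternative stable reassignment can raise the value of a currently worst-off agent without simultaneously lowering some equally- or worse-off agent (student or college), which is precisely what certifies leximin optimality.

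Finally, the $O(n^2)$ bound would follow by observing that there are only $O(n)$ boundary positions to consider, since $n$ students are split across two colleges, and that checking stability, recomputing the two college valuations, and performing a leximin comparison at each step costs $O(n)$.
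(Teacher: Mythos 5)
Your first two steps coincide with the paper's: your threshold characterisation of stable matchings is exactly Observation~\ref{obs:m2stable}, and your toggle-one-student-at-a-time traversal that never leaves the stable set is precisely how the paper connects any two stable matchings before stating Algorithm~\ref{alg:m2}. The gap is in your third step. You propose to certify optimality by arguing that ``the lexicographically-least prefix of $\mathcal{L}_\mu$ evolves monotonically,'' so that the algorithm first pins down the matchings maximising $\mathcal{L}_\mu[1]$, then $\mathcal{L}_\mu[2]$, and so on. That positionwise-greedy principle is what powers FaSt in the ranked \emph{isometric} case, but it rests on Observations~\ref{obs:order} and~\ref{obs:fixed}, which fail here (no rankings, no isometry). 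In FaSt-Const the leximin value is \emph{not} monotone along the traversal: toggling $bottom_{r}(\mu(r))$ perturbs three tuple entries at once, and the algorithm deliberately walks through leximin-\emph{inferior} matchings (e.g.\ when $v_r$ drops below the old minimum) because later toggles may more than compensate. This is exactly why the algorithm carries $\mu_P$, the best matching seen so far --- the paper explicitly notes that $\mu_P$ substitutes for a separate lookahead routine --- and why your invariant cannot hold: at intermediate steps the current matching need not maximise any prefix of the tuple, as the paper's own Case~3 (a toggle explored whose leximin value stayed below $\mu^*$'s) makes plain.

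The paper's actual organising device, which your plan is missing, is the forbidden-pair set $\mathcal{F}$. Lemma~\ref{lem:forbid} shows, by a three-case analysis keyed to the three reasons a pair enters $\mathcal{F}$ (the student's value $u_i(-\alpha(s_i))$ falls below the running minimum college value, making the decrease irreversible; the student has already been moved out of $\alpha(s_i)$; or a student the college prefers was passed over when another was moved), that \emph{no} leximin optimal stable matching matches any pair in $\mathcal{F}$. Optimality then follows student by student: any stable matching that places $s_i$ differently from the output either matches a forbidden pair or is reachable only by moving students from the leftmost college to the rightmost, and in either case is leximin-dominated by $\mu_P$. Your closing exchange argument (``no stable reassignment can raise a worst-off agent without lowering an equally- or worse-off one'') is false as stated for single toggles --- a toggle that lowers a worst-off agent can be a necessary step toward the optimum --- so it must be replaced by this global invariant about $\mathcal{F}$ together with the best-so-far bookkeeping. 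Your complexity argument, by contrast, is sound and matches the paper: no student is toggled twice and each iteration costs $O(n)$, giving $O(n^2)$.
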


Unfortunately, we cannot expect efficient algorithms for much more general settings. In Section \ref{sec:hard}, we find that under strict preferences, without an assumption of rankings, the problem of finding a leximin optimal stable matching is NP-Hard when $n=\Omega(m)$. 

\begin{restatable}{theorem}{stricthard}\label{lem:stricthard}
It is NP-Hard to find a leximin optimal stable matching under strict preferences.
\end{restatable}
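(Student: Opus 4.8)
The plan is to prove NP-Hardness by a reduction from a known NP-Hard problem; the natural candidate is a partition-type or bin-packing-type problem, since leximin optimality over colleges is fundamentally about balancing additive loads across a few bins. First I would pick a source problem whose structure mirrors ``balance the total valuations across colleges''. Given that the theorem only needs strict preferences (no rankings, no isometry), I have a lot of freedom in designing valuations, so I would aim for something like \textsc{3-Partition} or \textsc{Equitable Partition}, where the decision question is whether a set of item weights can be split into groups with equal sums. The key conceptual bridge is that if we can engineer the instance so that in \emph{every} stable matching each college's value is just the sum of the valuations of the students assigned to it, then the leximin tuple's smallest college-entries are governed by how evenly we can distribute ``student weights'' among colleges, which is exactly a partition question.

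The main steps I would carry out are as follows. First, set up an encoding: given a \textsc{Partition}-style instance with weights $a_1,\dots,a_n$, construct an \textsc{Smo} instance with a constant (or controlled) number of colleges each of capacity $n-1$ (so capacity never binds), and one student per weight. I would choose the students' valuations $u_i$ and colleges' valuations $v_j$ so that (i) the valuations are strict (break ties with tiny perturbations if needed, chosen small enough not to affect the partition threshold), and (ii) stability does \emph{not} over-constrain the matching — ideally, I want the gadget to make \emph{many} assignments stable so that the optimizer is genuinely free to choose a balanced split. The second step is the crux: I must prove the two directions of the reduction. The forward direction says that a ``yes'' instance of \textsc{Partition} yields a stable matching whose leximin tuple hits a target value; the reverse direction says that a stable matching achieving (or leximin-dominating) that target forces an exact equal split, hence a ``yes'' instance.

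The hardest part, and where I would spend the most care, is the reverse direction together with the interaction between stability and leximin. Stability can rule out precisely the balanced assignment I want, so the gadget must be designed so that the set of stable matchings is \emph{rich enough} to contain the balanced solution whenever one exists, yet the valuations must be calibrated so that the leximin-optimal stable matching is balanced \emph{only} when an exact partition exists. Concretely, I expect the obstacle to be showing that no ``unbalanced but stable'' matching can leximin-dominate a ``balanced'' one, and conversely that any leximin-optimal stable matching must be balanced — this requires controlling the entire sorted leximin tuple, not just its minimum, so I would argue that the worst-off college's value is maximized exactly at the equal split and that deviating strictly lowers some low-order entry. To keep this manageable, I would try to arrange the weights and a padding gadget so that, at the target, \emph{all} colleges attain the same value, making the leximin tuple flat at the bottom and forcing any improvement to require an even split; this reduces the delicate leximin comparison to the cleaner claim ``the maximin college value is achievable iff the partition is exact.''

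Finally, I would verify polynomiality of the construction (the valuations are written with polynomially many bits, and the perturbations used to enforce strictness are polynomially bounded), and confirm the $n = \Omega(m)$ regime noted before the statement is satisfied by the gadget, so that the reduction establishes NP-Hardness in the intended parameter range.
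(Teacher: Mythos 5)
Your plan has the right instincts (a number-balancing reduction, worrying about whether stability leaves the balanced matching available), but the concrete mechanism you propose does not survive the strictness requirement, and this is precisely the crux the theorem turns on. A Partition/3-Partition gadget in which colleges' leximin entries are ``loads'' works only when each student is \emph{indifferent} among colleges, because then every complete matching is stable --- indeed, that is exactly how the paper proves its \emph{weak-ranking} hardness result (Theorem \ref{lem:lexhard}). Once you perturb to make preferences strict, every student acquires a unique favourite college, and a matching is stable only if each college that some student covets holds exclusively students it values more; with generic tiny perturbations of a symmetric instance, the stable set collapses to near-sorted assignments and the balanced split you need will admit blocking pairs. ``Perturbations small enough not to affect the threshold'' fix the arithmetic but not stability, which is a combinatorial, not numerical, constraint. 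Two further problems: (i) you suggest a constant number of colleges, but the paper shows the $m=2$ strict case is solvable in $O(n^2)$ time (Theorem \ref{thm:algconst}) and explicitly leaves constant $m$ open, so any correct reduction must let $m$ grow (the paper's has $m=k+1$, $n=2k$); (ii) under strict preferences a reassigned student's own value necessarily changes, so the bottom of the leximin tuple cannot be made to consist of college loads alone --- your reduction's correctness argument, which controls only college entries, has no handle on this.

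The paper's actual proof shows how to repair both defects, and it is structurally quite different from your proposal. It reduces from \textsc{Subset Sum}, not an equal-split problem: one designated college $c_m$ plays the role of the selected subset, each weight $a_i$ becomes a student $s_i$ whose top choice is a private college $c_i$ (value $B$) and whose value for $c_m$ is the ``see-saw'' quantity $B-a_i+\epsilon$, and each $c_i$ is guarded by a dummy student $s_{k+i}$ that $c_i$ values at $2B$, strictly above $s_i$. The guards are what buy the richness you were hoping to get from indifference: with $s_{k+i}$ occupying $c_i$, the pair $(s_i,c_i)$ never blocks, so \emph{both} placements of $s_i$ (at $c_i$ or at $c_m$) are stable, giving $2^k$ stable matchings in bijection with subsets. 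The see-saw values then make each local leximin comparison (over the triple $s_i$, $c_i$, $c_m$) encode exactly the inequality $a_i\leq B-v_m(\mu)$, so the leximin optimum maximizes $v_m(\mu)$ subject to $v_m(\mu)\leq B$, and a subset summing to $B$ exists iff $v_m(\mu^*)=B$ --- a threshold condition on a single college rather than a flat bottom across all colleges. Without the guard gadget and the $B-a_i+\epsilon$ calibration, your reverse direction (``leximin-optimal stable $\Rightarrow$ balanced'') cannot be carried out, so as written the proposal has a genuine gap rather than being an alternative proof.
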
 

The hardness of finding the leximin optimal stable matching doesn't follow from the hardness of leximin in the allocations, as stable matchings under strict preferences can't capture all feasible allocations. Given that rankings in combination with strict preferences give rise to efficient algorithms, a natural question is whether rankings with weak preferences can lead to polynomial time algorithms. This is not the case.  We show that even on relaxing strict rankings to weak rankings, finding the leximin optimal stable matching under isometric valuations is intractable, even with a constant number of colleges. %

\begin{restatable}{theorem}{isohard}\label{lem:lexhard}
It is NP-Hard to find the leximin optimal stable matching under isometric valuations with weak rankings with $m=2$ and strongly NP-Hard with $n=3m$.
\end{restatable}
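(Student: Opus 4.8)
The plan is to prove Theorem~\ref{lem:lexhard} by two separate reductions, one establishing ordinary NP-Hardness for the case $m=2$ and a stronger one establishing strong NP-Hardness when $n=3m$. For the $m=2$ case, the natural source problem is \textsc{Partition}: given positive integers $a_1,\dots,a_k$ summing to $2T$, decide whether some subset sums to exactly $T$. I would build an isometric instance with two colleges and $n=k$ (plus perhaps a few gadget students) where each student $\A_i$ has a value tied to $a_i$. The weak rankings allow me to make the two colleges tie in certain students' preferences, which is essential: under strict rankings Lemma~\ref{lem:struct} forces a contiguous, essentially rigid structure, so introducing ties is what buys the combinatorial freedom to encode an arbitrary subset choice. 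The idea is to arrange the valuations so that a stable matching may split the students between $\B_1$ and $\B_2$ in many ways, and the leximin-optimal split is the one that balances the two colleges' total valuations as evenly as possible; the worst-off agent (one of the two colleges) is maximized precisely when the two sides each receive a subset summing to $T$. Thus a target leximin value is achievable if and only if the \textsc{Partition} instance is a yes-instance.

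For the strong NP-Hardness with $n=3m$, I would reduce from \textsc{3-Partition} (or a numerical problem that is strongly NP-Hard, i.e. NP-Hard even when the integers are bounded by a polynomial in the input size): given $3m$ integers and a target $T$ with each integer strictly between $T/4$ and $T/2$, decide whether they can be partitioned into $m$ triples each summing to $T$. The correspondence is clean here because $3$ students per college is exactly the $n=3m$ regime, and the size constraint $T/4 < a_i < T/2$ forces each college to receive exactly three students in any balanced matching. I would construct an isometric instance with $m$ colleges, each with capacity accommodating three students, where student $\A_i$ carries weight $a_i$, and weak rankings again supply the indifference needed so that many triangle-assignments are stable. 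The leximin objective should be maximized exactly when every college receives value $T$, i.e. when a valid \textsc{3-Partition} exists; because \textsc{3-Partition} remains hard with unary-encoded (polynomially bounded) integers, the resulting matching problem is strongly NP-Hard.

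The key technical steps, in order, are: (i) fix the valuation matrix $V$ and the weak ranking structure, being careful that each row and column is weakly decreasing so the instance genuinely has weak rankings; (ii) characterize which matchings of the constructed instance are stable, most likely via a lemma analogous to Lemma~\ref{lem:struct} but accounting for the ties, showing that stable matchings correspond bijectively (or surjectively onto the relevant ones) to the admissible set-partitions of the students; (iii) compute the leximin tuple of each such stable matching and identify which quantity governs the smallest entries---here I expect the smallest entries to be the colleges' total valuations, so that leximin-optimality is equivalent to maximizing the minimum college value and then lexicographically improving; and (iv) verify the equivalence between achieving the target leximin tuple and the existence of the desired (equal-sum) partition, in both directions. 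I would then note that since the reductions are clearly polynomial and, in the \textsc{3-Partition} case, use only polynomially bounded numbers, the claimed (strong) hardness follows.

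The main obstacle I anticipate is step~(ii): pinning down the stable-matching structure once ties are allowed. With strict rankings the contiguity from Lemma~\ref{lem:struct} makes the stable set rigid, but weak rankings break this, and I must ensure that (a) every partition I want to encode actually yields a stable matching (no blocking pair sneaks in), and (b) no unintended stable matching with a better leximin tuple exists that would let a no-instance spuriously hit the target. Controlling the blocking-pair condition requires the indifferences to be placed exactly where two colleges should be interchangeable for a student, while keeping strict preference everywhere a blocking pair must be excluded; getting this gadget simultaneously correct for stability and for the leximin accounting is the delicate part. The second subtle point is ensuring the leximin tuple is dominated by the college values rather than by some student value, which constrains how large the student-side valuations may be relative to the college totals, and I would handle this by scaling the gadget values appropriately.
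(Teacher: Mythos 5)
Your proposal takes essentially the same route as the paper: it reduces from the balanced-partition problem for $m=2$ and from 3-Partition for $n=3m$, encoding each integer $p_i$ as an isometric student weight with $V_{ij}=p_i$ for every college $j$, so that leximin optimality over stable matchings is exactly the equal-sum balancing of college totals you describe. The only point worth noting is that the paper's instantiation is the maximally degenerate one — every student is fully indifferent among \emph{all} colleges — so every complete matching is trivially stable and the students' entries in the leximin tuple are identical across matchings, which dissolves at a stroke the obstacles you anticipated (the stability characterization with ties, gadget students, and scaling to keep the college values governing the tuple).
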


Clearly, this implies that it is NP-Hard to find the leximin optimal stable matching in general. However, in the absence of isometric valuations, the problem is harder still. For the general problem, we establish APX-Hardness. 

\begin{restatable}{theorem}{polyapxhard}\label{thm:apxhard}
Unless P=NP, for any $\delta>0$, $c\in \mathbb{Z}^+$ there is no $1/cn^\delta$-approximation algorithm to find a leximin optimal stable matching under unconstrained additive valuations.  
\end{restatable}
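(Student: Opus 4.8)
The plan is to prove inapproximability by a gap-introducing, polynomial-time reduction from a canonical NP-complete problem (say 3-SAT, or an independent-set / exact-cover formulation, whichever admits the cleanest gadgets) that produces SMO instances separated by a \emph{$0$-versus-positive} gap. I interpret an $\alpha$-approximation to the leximin optimal as returning a complete stable matching whose smallest agent value, the first coordinate $\mathcal{L}_\mu[1]$ of its leximin tuple, satisfies $\mathcal{L}_\mu[1]\ge \alpha\,\mathcal{L}_{\mu^*}[1]$ for $\mu^*$ leximin optimal; since leximin first maximizes this minimum, any approximation to leximin must in particular approximate the maximin (egalitarian) value, so it suffices to rule out approximating the latter. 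The goal is a map $\phi\mapsto I_\phi$ such that if $\phi$ is satisfiable then $I_\phi$ admits a complete stable matching in which \emph{every} agent receives strictly positive value (so, after scaling all useful valuations to $1$, $\mathcal{L}_{\mu^*}[1]\ge 1$), while if $\phi$ is unsatisfiable then \emph{every} complete stable matching of $I_\phi$ leaves some agent with value $0$, whence $\mathcal{L}_{\mu^*}[1]=0$.

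Given such a reduction the conclusion is immediate, and in fact stronger than stated. In the unsatisfiable case the maximin value over $\mathcal{S}_C(I_\phi)$ is exactly $0$, so any complete stable matching — in particular the output of a supposed $1/(cn^\delta)$-approximation — has $\mathcal{L}_\mu[1]=0$; in the satisfiable case the same algorithm must return $\mathcal{L}_\mu[1]\ge \mathcal{L}_{\mu^*}[1]/(cn^\delta)\ge 1/(cn^\delta)>0$. Hence merely testing whether the returned matching has positive minimum value decides satisfiability in polynomial time, placing an NP-complete problem in P. Because a $0$-versus-positive separation defeats \emph{every} multiplicative factor at once, the statement holds simultaneously for all $\delta>0$ and $c\in\Z^+$, exactly as claimed.

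The substance is the reduction. I would build one variable gadget per variable and one clause gadget per clause, exploiting that valuations here are unconstrained (neither ranked nor isometric) to encode choices directly. A variable gadget uses a small set of \nameBs and \nameAs whose only stable configurations correspond to the two truth assignments: stability forces a binary either--or decision about which literal-student is consumed locally and which is freed to serve clauses of the matching polarity. A clause gadget is a \nameB or a dedicated \nameA engineered so that it can be matched to a valued-$1$ partner exactly when at least one of its literals is set true by the variable gadgets, and is otherwise matched only to partners it values at $0$. Setting all useful valuations to $1$ and the rest to $0$ then makes the minimum agent value positive precisely when a consistent satisfying assignment is realized across all gadgets.

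The main obstacle is the stability analysis in both directions. For the satisfiable case I must verify that the intended all-positive matching has no blocking pair — in particular that the cross-gadget edges routing freed literal-students to clause gadgets do not themselves block. For the unsatisfiable case I must establish the stronger claim that \emph{no} stable matching gives every agent positive value, ruling out spurious stable configurations that satisfy a clause gadget without a globally consistent assignment; this requires the gadgets to be locally rigid under stability, so that a positive clause gadget certifies a true literal while the variable gadget forbids satisfying a literal and its negation at once. I would also maintain feasibility throughout: the instance must admit at least one complete stable matching (guaranteed by taking capacities with total at least $n$ and, if needed, adding a catch-all \nameB valued $0$ by everyone), so the approximation algorithm always has a legal output whose minimum value the reduction can inspect. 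Verifying these rigidity and no-blocking-pair conditions gadget-by-gadget, and checking that $|I_\phi|$ is polynomial in $|\phi|$ so that the factor $1/(cn^\delta)$ is taken over the correct instance size, completes the argument.
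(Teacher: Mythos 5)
Your logical frame is sound: a $0$-versus-positive gap at the first leximin coordinate would defeat every multiplicative factor at once, and since the paper's per-index definition of an $\alpha$-approximation includes the lower bound $\alpha\,\mathcal{L}_{\mu^*}[1]\leq \mathcal{L}_{\mu}[1]$ at index $1$, ruling out even the weaker maximin guarantee would rule out the stronger per-index one. The genuine gap is that the reduction carrying all the weight is never constructed. Everything is conditional (``I would build variable gadgets\ldots''), and you yourself flag the unsatisfiable direction --- \emph{no} stable matching gives every agent positive value --- as unverified. That direction is exactly where your approach is fragile in this paper's model: stability here is only justified envy-freeness, i.e.\ a blocking pair requires an already-matched $s_{i'}\in\mu(c_j)$ with $v_j(s_i)>v_j(s_{i'})$, and non-wastefulness is explicitly not demanded, so a college with spare capacity (or an empty one) never blocks. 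On $0/1$ valuations with ties this makes the set of stable matchings enormous, so the ``local rigidity'' your gadgets need (stability forcing a binary either--or choice, and a positive clause gadget certifying a true literal) has no support as stated; you must exclude \emph{every} all-positive stable configuration that satisfies clause gadgets without a globally consistent assignment, under a stability notion that imposes almost no structure on such instances. Note also that the paper's exact-hardness reductions (from subset sum and from partition) give you no head start: in those constructions essentially all relevant matchings are stable and the minimum value is positive in both the yes- and no-cases, so no $0$-gap can be extracted from them.

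For contrast, the paper deliberately avoids asking stability to encode anything combinatorial. It reduces from bin packing (Lemma~\ref{lem:halfapx}), attaches a dummy college $c_m$ whose valuation occupies the \emph{last} index of every leximin tuple, and arranges $v_m(\mu^*)=(t+1)n$ when a packing exists versus $v_m(\mu^*)\geq t(t+1)n$ otherwise; replicating the instance $t=\lceil cn^{\delta}\rceil$ times amplifies this to a multiplicative gap exceeding $cn^{\delta}$, and the two-sided per-index approximation definition is then applied at that last index. Both optima are positive --- the hardness comes from gap amplification, not from a zero --- and the construction happens to use only weak rankings, which strengthens the statement. If you want to salvage your route, the natural tool is the known NP-completeness of deciding whether a complete stable matching exists under ties and incomplete lists (COM-SMTI), encoding unacceptability as value $0$; but you would then have to verify that SMTI-stability coincides with this paper's valuation-based stability (in particular its treatment of empty and under-capacity colleges) on your instances, which is precisely the analysis your proposal defers. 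As written, the proposal is a plan whose entire technical substance is missing.
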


Before we discuss these impossibility results, we detail our algorithmic results, starting with the setting of ranked isometric valuations.

\section{Algorithmic Results}\label{sec:algs}
The algorithms in this paper all have a similar outline. All three start with the student optimal stable matching and then iterate over the space of stable matchings to look for leximin value improvements by moving students from colleges on the right of the leximin tuple to those on the left. Doing this efficiently requires some method of going from any one stable matching to any other, making small changes at a time and never violating stability.

In this paper, the necessary method of iteration relies on strict preferences and ranked valuations.  We first show how these properties ensure the requisite structure over the space of stable matchings.

Recall that for any $i \in [n]$ and $j\in [m]$, $u_{i}(\B_1)>\cdots \,>u_{i}(\B_m)$ and $v_{j}(\A_1)>\cdots \,>v_{j}(\A_n)$. These rankings give structure to the space of stable matchings. We find that for a matching to be stable, it must be in accordance with the rankings. 

\rankstruct*

\begin{proof}

\noindent We prove the forward implication by assuming $\mu$ to be a stable matching. Let $\mu$ match each \nameB $\B_j$ to $k_j$ students. 
We inductively prove that the required property holds. We first show that $\mu(\B_1)$ is the set of first $k_1$ \nameAs that is, $\A_1,\cdots,\A_{k_1}$. If $k_1=0$, this is trivially satisfied. We can thus assume that $k_1>0$. Suppose $\B_1$ is not matched to all of $\A_1,\cdots,\A_{k_1}$ under $\mu$. As a result, there exists some $i \in [k_1]$ such that $\A_{i} \notin \mu(\B_1)$ and that $\mu(\A_{i}) = \B_j$ for some $j>1$. Consequently, there must be an $i'>k_1$ such that $\A_{i'}\in \mu(\B_1)$. However, $u_{i}(\B_1)>u_{i}(\B_j)$ and $v_{1}(\A_i)>v_{1}(\A_{i'})$, by assumption. Thus, $(\A_i, \B_1)$ form a blocking pair, which contradicts the fact that $\mu$ is a stable matching.

We now assume that, for the stable matching $\mu$, the lemma holds for the first $t-1$ colleges, $t\geq 2$ i.e.,
$\mu(\B_j) = \{\A_{w_j+1},\cdots, \A_{w_j+k_j}\}$ where $w_j=\sum_{j'=1}^{j-1} k_{j'}$ for all $ j< t.$
 
We now show that the lemma is true for $\B_{t}$. Suppose not, then there exists $i$ such that $w_t<i\leq w_t+k_t$ and $\A_{i}\notin\mu(\B_t)$. This implies that, there must exist some  $i'>w_t+k_t$ such that $\A_{i'}\in \mu(\B_t)$. Let $\mu(\A_i)=\B_j, j>t$. Now, as the valuations are ranked, $u_{i}(\B_t)>u_{i}(\B_j)$ and $v_{t}(\A_i)>v_{t}(\A_{i'})$.
This implies that $(\A_i,\B_{t})$ form a blocking pair which contradicts the stability of $\mu$. 
This proves that if $\mu$ is a stable matching that matches $\B_j$ to $k_j$ \nameAs, then $\mu(\B_j)$ must be equal to $\{\A_{w_j +1},\cdots, \A_{w_j+k_j}\}$.

We now prove the reverse implication. Let $\mu$ be a matching which matches $\B_j$ to $k_j$ \nameAs and $\mu(\B_j) = \{\A_{w_j+1},\cdots, \A_{w_j+k_j}\} $ where $w_j=\sum_{j'=1}^{j-1} k_{j'}$ for all $ j\in [m]$. 

Fix $i\in [n]$, and let $\mu(\A_i)=\B_j$. If $j=1$, $\A_i$ clearly does not form a blocking pair with any \nameB as they are matched to their most preferred college. If $j>1$, then $A_i$ prefers $\B_1,\cdots, \B_{j-1}$ to $\B_j$. But, all these \nameBs prefer each of the \nameAs matched to them to $\A_i$, because of the ranking. Consequently, $\A_i$ does not form any blocking pairs. As a result, there are no blocking pairs in $\mu$ and it is a stable matching. 
\end{proof}

Note that the number of stable matchings under ranked instances is thus $\binom{n+1}{m-1}$, hence, brute force would not be efficient if the number of \nameBs isn't constant. We first look at a subset of matching instances, called \textit{ranked isometric valuations} as a stepping stone to a more general result. 

Lemma \ref{lem:struct} in conjunction with isometric valuations, provides a structure over the leximin values of stable matchings, under ranked valuations. This enables us to find the leximin optimal stable matching. We now list some observations about the structure of a leximin optimal stable matching $\mu^*$ over the set of stable matchings under ranked valuations.

\begin{observation}\label{obs:non-empty}
No agent is unmatched under a leximin optimal stable matching, that is, $\mu(a)\neq \emptyset$ for any $a\in \symbA\times \symbB$.
\end{observation}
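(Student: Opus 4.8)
The plan is to prove Observation~\ref{obs:non-empty} by contradiction: assume some agent is left unmatched under a leximin optimal stable matching $\mu^*$, and construct a strictly leximin-dominating stable matching, which is the desired contradiction. The key leverage is that we work with \emph{ranked} valuations, so Lemma~\ref{lem:struct} tells us exactly what stable matchings look like (each $\B_j$ receives a contiguous block $\{\A_{w_j+1},\dots,\A_{w_j+k_j}\}$), and all valuations are non-negative with the student side strictly preferring lower-indexed colleges and the college side strictly preferring lower-indexed students.

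First I would argue that no \nameB can be unmatched. Since $n \geq m$ and each $b_j \le n$, there are enough students to give every college at least one. Suppose some $\B_j$ has $k_j = 0$. By the contiguous structure, an empty college means the students it ``should'' have received are instead matched further to the right (to less-preferred colleges) or are unmatched. I would show that moving a student from a later college (or from being unmatched) into $\B_j$ keeps the matching stable — the contiguity property of Lemma~\ref{lem:struct} is preserved by shifting the blocks — and strictly raises that student's value (since $\B_j$ is ranked higher, $u_i(\B_j) > u_i(\mu^*(\A_i))$) while giving $\B_j$ a strictly positive value where it previously had zero. The only entries of the leximin tuple that change move \emph{up}, so the new matching leximin-dominates $\mu^*$.

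Next I would handle an unmatched \nameA. If a student $\A_i$ is unmatched, then $u_i(\mu^*) = 0$, so $\A_i$ sits at (or tied for) the bottom of the leximin tuple. By counting — with $n \ge m$ students and total capacity possibly exceeding $n$, but at minimum because every college being nonempty still leaves room — there must be a college with spare capacity, or a college whose block can absorb $\A_i$ while remaining contiguous. Assigning $\A_i$ to any college $\B_j$ gives $u_i(\B_j) > 0$, strictly improving the worst (or near-worst) coordinate. The subtle point is to re-index so that the blocks stay contiguous in the sense of Lemma~\ref{lem:struct}; appending the lowest-indexed available students to the appropriate block accomplishes this and keeps the result stable by the reverse implication of the lemma. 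Since strict rankings guarantee $u_i(\B_j)>0$ for the newly matched student and non-negativity guarantees no other agent's value decreases below the affected coordinate, the modified matching leximin-dominates $\mu^*$.

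The main obstacle I anticipate is the bookkeeping in the re-indexing step: one must verify carefully that after inserting a previously-unmatched agent the result still satisfies the exact contiguous-block characterization of Lemma~\ref{lem:struct}, and hence is genuinely stable, rather than merely ``stable-looking.'' A clean way to sidestep heavy case analysis is to observe that any complete matching respecting the ranking is stable (by the reverse direction of Lemma~\ref{lem:struct}), so it suffices to exhibit \emph{one} complete stable matching whose leximin tuple dominates $\mu^*$'s; since every coordinate of $\mu^*$ that equals $0$ (an unmatched agent) can be strictly increased while no positive coordinate need drop below it, completeness of the constructed matching immediately yields leximin domination. This reduces the whole argument to: an incomplete stable matching is always leximin-dominated by a suitably chosen complete one.
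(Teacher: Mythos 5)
The paper gives no explicit proof of this observation---it is treated as immediate---and the one-line argument it implicitly relies on is precisely the one you reach in your \emph{final} paragraph: an unmatched agent contributes value $0$ to the leximin tuple, so $\mathcal{L}_{\mu}[1]=0$; meanwhile a complete stable matching always exists (e.g., FaSt's initialization $\mu(\B_1)=\{\A_1,\dots,\A_{n-m+1}\}$, $\mu(\B_j)=\{\A_{n-(m-j)}\}$ for $j\geq 2$, stable by the reverse direction of Lemma~\ref{lem:struct}) and gives every agent positive value, hence dominates any incomplete matching already at the first index of the tuple. Your opening two-thirds---the contradiction setup with local moves and block re-indexing---is thus unnecessary, and it is also the fragile part of the write-up: the local-exchange claims (``the only entries of the leximin tuple that change move up'') are not true in general, because filling an empty college while preserving the contiguous-block structure of Lemma~\ref{lem:struct} forces students to be demoted, strictly lowering their values and the values of the colleges they leave. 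Only the global first-index comparison makes the domination clean, which is exactly why your proposed shortcut is the right proof rather than merely a way to ``sidestep heavy case analysis.''

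One concrete slip worth flagging: you assert that strict rankings guarantee $u_i(\B_j)>0$ for the newly matched student. Strictness only gives $u_i(\B_1)>\cdots>u_i(\B_m)\geq 0$, so positivity is guaranteed only above the bottom-ranked college; under the paper's stated non-negativity assumption, $u_n(\B_m)$ and $v_m(\A_n)$ may both be $0$. In that degenerate case the observation itself can fail: with $n=m$ and $u_n(\B_m)=v_m(\A_n)=0$, the matching that leaves $\A_n$ and $\B_m$ unmatched has exactly the same leximin tuple as the unique complete stable matching, so an incomplete matching is also leximin optimal. The statement therefore tacitly requires strictly positive valuations (as the paper's capacitated appendix discussion assumes), and your argument is correct exactly under that positivity. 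The fix is simply to state it: assume all valuations positive, note an unmatched agent forces a leading $0$, exhibit the complete stable matching above, and conclude leximin domination at index $1$.
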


\noindent This in conjunctions with Lemma \ref{lem:struct} leads to the following observation.  

\begin{observation}\label{obs:first-last}
Under ranked valuations, in the leximin optimal stable matching, $\A_1\in\mu^*(\B_1)$ and $\A_n\in\mu^*(\B_m)$.
\end{observation}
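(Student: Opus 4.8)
The plan is to prove Observation \ref{obs:first-last} by combining the structural characterisation of Lemma \ref{lem:struct} with the completeness guarantee of Observation \ref{obs:non-empty}. By Observation \ref{obs:non-empty}, the leximin optimal stable matching $\mu^*$ leaves no agent unmatched, so every $\A_i$ is assigned to some college and every $\B_j$ receives at least one student. Since $\mu^*$ is stable, Lemma \ref{lem:struct} applies, and the students matched to the colleges partition $\{\A_1,\ldots,\A_n\}$ into contiguous blocks in rank order: $\mu^*(\B_j)=\{\A_{w_j+1},\ldots,\A_{w_j+k_j}\}$ with $k_j=|\mu^*(\B_j)|\geq 1$ for every $j$, and $w_1=0$.

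From here both claims follow essentially immediately. For $\A_1\in\mu^*(\B_1)$: since $w_1=\sum_{t=1}^{0}k_t=0$ and $k_1\geq 1$ (by completeness, $\B_1$ is matched to at least one student), the first block is $\mu^*(\B_1)=\{\A_1,\ldots,\A_{k_1}\}$, which contains $\A_1$. For $\A_n\in\mu^*(\B_m)$: because the blocks are contiguous and collectively cover all $n$ students (again by completeness, no student is unmatched), the final block $\mu^*(\B_m)$ must end at $\A_n$, i.e. $w_m+k_m=\sum_{t=1}^{m}k_t=n$, so $\A_n\in\mu^*(\B_m)$.

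I would structure the write-up as: first invoke Observation \ref{obs:non-empty} to assert that every college receives at least one student (so each $k_j\geq 1$) and that the students exactly exhaust $\{\A_1,\ldots,\A_n\}$; then invoke Lemma \ref{lem:struct} to get the contiguous-block form; then read off the two membership statements from the endpoints $w_1=0$ and $\sum_{t=1}^m k_t=n$.

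There is no real obstacle here — this is a short corollary of the two preceding results. The only point worth stating carefully is the use of completeness to guarantee $k_1\geq 1$ (otherwise the ``first block'' could start at $\B_2$) and to guarantee that the blocks reach all the way up to $\A_n$ (otherwise some top-ranked students past the last block would be unmatched, contradicting Observation \ref{obs:non-empty}). Once completeness is in hand, the contiguity from Lemma \ref{lem:struct} forces the first and last students into the first and last colleges respectively.
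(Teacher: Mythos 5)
Your proof is correct and takes exactly the paper's route: the paper presents Observation \ref{obs:first-last} as an immediate consequence of Observation \ref{obs:non-empty} (no agent unmatched) combined with the contiguous-block characterisation of Lemma \ref{lem:struct}, which is precisely your argument. Your explicit bookkeeping with $w_1=0$, $k_j\geq 1$ for all $j$, and $\sum_{t=1}^{m}k_t=n$ simply spells out the details the paper leaves implicit.
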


For other students, we are only able to guarantee the following as a consequence of Lemma \ref{lem:struct}.
    
\begin{observation}\label{obs:predecessor}
If $\A_i$ is matched to $\B_j$ then $\A_{i-1}$ must be matched to either $\B_j$ or $\B_{j-1}$. 
That is, $\A_i\in\mu^*(\B_j)\ \Rightarrow\  \A_{i-1}\in \Big(\mu^*(\B_j)\cup \mu^*(\B_{j-1})\Big)$.
\end{observation}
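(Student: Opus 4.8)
The plan is to leverage the contiguous-block structure from Lemma~\ref{lem:struct} together with the non-emptiness guarantee of Observation~\ref{obs:non-empty}, and then simply trace where $\A_{i-1}$ lands.

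First I would apply Lemma~\ref{lem:struct} to the leximin optimal stable matching $\mu^*$: since $\mu^*$ is stable under ranked valuations, each \nameB receives a contiguous block $\mu^*(\B_j) = \{\A_{w_j+1}, \ldots, \A_{w_j+k_j}\}$ with $w_j = \sum_{t<j} k_t$. Next I would invoke Observation~\ref{obs:non-empty} to conclude that every \nameB is non-empty, i.e.\ $k_j \ge 1$ for all $j$. This is the key structural input: it means the blocks partition $\{\A_1,\ldots,\A_n\}$ into $m$ consecutive, non-degenerate intervals, so that $w_{j-1} + k_{j-1} = w_j$ with $k_{j-1}\ge 1$ whenever $j\ge 2$. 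In particular, there is no empty \nameB sitting between $\B_{j-1}$ and $\B_j$ that could absorb $\A_{i-1}$.

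Then, assuming $\A_i \in \mu^*(\B_j)$ (and $i\ge 2$, since otherwise $\A_{i-1}$ is undefined and the claim is vacuous), I would split on the position of $\A_i$ within its block. If $i > w_j + 1$, then $w_j + 1 \le i-1 \le w_j + k_j$, so $\A_{i-1}$ lies in the same block and $\A_{i-1} \in \mu^*(\B_j)$. If instead $i = w_j + 1$, so that $\A_i$ opens $\B_j$'s block, then $i\ge 2$ forces $w_j \ge 1$ and hence $j\ge 2$; here $\A_{i-1} = \A_{w_j} = \A_{w_{j-1}+k_{j-1}}$ is exactly the last \nameA of $\B_{j-1}$'s block, giving $\A_{i-1} \in \mu^*(\B_{j-1})$. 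Combining the two cases yields $\A_{i-1} \in \big(\mu^*(\B_j) \cup \mu^*(\B_{j-1})\big)$, as claimed.

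I expect the only subtle point, and hence the main obstacle, to be justifying that $\A_{i-1}$ cannot skip back past $\B_{j-1}$ to some \nameB $\B_{j'}$ with $j' < j-1$. This is precisely ruled out by the non-emptiness of every intermediate \nameB, which is why Observation~\ref{obs:non-empty} is essential: without it, an empty $\B_{j-1}$ would force $w_{j-1}=w_j$, and the predecessor \nameA could then belong to an earlier \nameB, breaking the claim.
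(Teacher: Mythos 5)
Your proposal is correct and matches the paper's intended argument: the paper states this observation without a separate proof, treating it exactly as you do --- an immediate consequence of the contiguous-block characterisation in Lemma~\ref{lem:struct} combined with the non-emptiness guarantee of Observation~\ref{obs:non-empty}, so that either $\A_{i-1}$ sits inside $\B_j$'s block or, when $\A_i$ opens that block, $\A_{i-1}$ is the last student of the (non-empty) block of $\B_{j-1}$. Your closing remark correctly pinpoints why non-emptiness is the load-bearing hypothesis, since an empty intermediate college would let the predecessor skip back past $\B_{j-1}$.
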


We now use these observations, first in the specific context of ranked isometric valuations.

\subsection{Leximin for Ranked Isometric Valuations }

For ranked isometric valuations we have two additional observations which are instrumental in the execution of our algorithm.

\begin{observation}\label{obs:order}
\noindent Under ranked isometric  valuations, for each $\mu\in \mathcal{S}_C(I)$, the values of the \nameAs will appear in order of their rank in the leximin tuple, that is, for any  $i<i'$, $u_{i'}(\mu)<u_{i}(\mu)$. 
\end{observation}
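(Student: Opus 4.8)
The plan is to prove Observation~\ref{obs:order} directly from the isometric structure together with Lemma~\ref{lem:struct}. The claim is that under ranked \emph{isometric} valuations, for any complete stable matching $\mu$ and any two students $\A_i, \A_{i'}$ with $i<i'$, we have $u_{i'}(\mu)<u_i(\mu)$. The key observation is that the value a student receives is determined entirely by \emph{which college} she is matched to, since $u_i(\mu)=u_i(\mu(\A_i))$, and the ranking assumption makes $u_i(\B_1)>u_i(\B_2)>\cdots>u_i(\B_m)$ for every student simultaneously. So the real content is to show that the earlier-ranked student $\A_i$ is matched to a college that is no worse in the ranking than the college of $\A_{i'}$, and then use isometry to control the actual cardinal values.

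First I would invoke Lemma~\ref{lem:struct}: in any stable matching of a ranked instance, each college $\B_j$ receives a contiguous block of students $\{\A_{w_j+1},\dots,\A_{w_j+k_j}\}$, and these blocks are laid out in increasing index order as $j$ increases. Consequently, if $i<i'$ then the college index of $\A_i$ is less than or equal to that of $\A_{i'}$; that is, writing $\mu(\A_i)=\B_j$ and $\mu(\A_{i'})=\B_{j'}$, we get $j\le j'$. This is the purely combinatorial half of the argument and follows immediately from the block structure, since the blocks are non-overlapping and ordered.

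Next I would translate this into a strict inequality on cardinal values. There are two cases. If $j<j'$, then $\A_{i'}$ is matched to a strictly lower-ranked college than $\A_i$. Here I want $u_i(\B_j)>u_{i'}(\B_{j'})$. This is where I expect the main obstacle, because the two students have different valuation functions, so I cannot directly compare $u_i(\B_j)$ with $u_{i'}(\B_{j'})$ using only the within-student ranking. This is exactly where \textbf{isometry} is essential: $u_i(\B_j)=V_{ij}=v_j(\A_i)$ and $u_{i'}(\B_{j'})=V_{i'j'}=v_{j'}(\A_{i'})$, and the matrix $V$ has both rows and columns sorted in decreasing order. Since $i<i'$ and $j<j'$, I can chain the monotonicities: $V_{ij}\ge V_{ij'}> V_{i'j'}$ (moving right along row $i$ is non-increasing, and moving down column $j'$ is strictly decreasing by the strict ranking), which yields $u_i(\mu)>u_{i'}(\mu)$. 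If instead $j=j'$, both students are at the same college $\B_j$, so $u_i(\mu)=V_{ij}$ and $u_{i'}(\mu)=V_{i'j}$, and strictness of the column ordering with $i<i'$ gives $V_{ij}>V_{i'j}$ directly.

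In both cases the desired strict inequality $u_{i'}(\mu)<u_i(\mu)$ holds, completing the proof. The crucial point to emphasize in the write-up is that neither half works alone: the block structure pins down the relative college ranking, while the two-dimensional monotonicity of the isometric matrix $V$ is what lets me compare values across two \emph{different} students. The only subtlety worth stating carefully is the chaining step when $j<j'$, ensuring at least one of the two monotonicity steps is strict so that the overall inequality is strict.
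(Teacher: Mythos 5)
Your proof is correct, and it is exactly the argument the paper leaves implicit: the paper states Observation~\ref{obs:order} without proof, treating it as immediate from Lemma~\ref{lem:struct} together with the fact (noted in the preliminaries) that for ranked isometric valuations every row and column of $V$ is sorted in decreasing order. Your write-up simply makes that reasoning explicit --- the block structure gives $j\le j'$ for the colleges of $\A_i$ and $\A_{i'}$, and the chained monotonicity $V_{ij}\ge V_{ij'}>V_{i'j'}$ (with isometry supplying the cross-student comparison) gives the strict inequality --- so it matches the paper's approach, with the case split and the strictness bookkeeping correctly handled.
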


\begin{observation}\label{obs:fixed}
For any $\mu$, under isometric valuations, $v_{j}({\mu})\geq u_{i}(\mu)$ for each \nameB $\B_j$ and for all $\A_i\in\mu(\B_j)$.
\end{observation}

\noindent These observations are critical for the particularly design of the FaSt (Algorithm \ref{alg:fast}), which outputs a leximin optimal stable matching under ranked isometric valuations in time that is linear in the size of the input.

\subsubsection*{{FaSt}: An Algorithm to Find a Fair and Stable Matching}
We first present an $O(mn)$ time algorithm, called FaSt, to find a leximin \underline{fa}ir \underline{st}able matching under ranked isometric  valuations. Recall that for isometric valuations, $V$ also denotes an $n\times m$ matrix where $V_{ij}=u_i(\B_j)=v_j(\A_i)$. For ease of presentation, we shall assume that for any \nameB $\B_j$, $b_j=n-1$, that is we effectively assume no capacity constraints. This is not a particularly limiting assumption however. The full algorithm with unrestricted $b_j$s is given in Appendix \ref{sec:cap} and follows the same logic. The time complexity of both the versions of the algorithm is $O(mn)$. We now use the structure outlined in the previous subsection to give an $O(mn)$ time algorithm for finding the leximin optimal stable matching. 

In essence, the algorithm starts with the \nameA optimal complete stable matching and gradually finds leximin optimal matching by improving the valuations of the \nameBs  in increasing order of rank, keeping stability and non-zero valuations for all agents as an invariant. By Observation \ref{obs:predecessor} and \ref{obs:fixed} we can start with \nameA $\A_n$ and iteratively decide the matchings for higher ranked \nameAs.

The initial stable solution matches \nameA $\A_n$ to \nameB $\B_m$ and \nameA $\A_1$ to \nameB $\B_1$ (using Observation~\ref{obs:first-last}). Further, the first $n-m+1$ \nameAs are matched to $\B_1$, for the remaining \nameBs, $\mu(\B_j)=\A_{n-(m-j)}$ for each $j\geq 2$. The algorithm then systematically increases the number of \nameAs matched to the lowest ranked \nameB $\B_m$ till there is a decrease in the leximin value, at which point, we switch to the next lowest ranked \nameB and repeat. During this, the number of \nameAs matched to all other \nameBs except $\B_1$ remains fixed. The Demote algorithm (Algorithm \ref{alg:demote}) does this.

Observe that when executing Demote, the values of the students do not increase and those of the colleges, other than $c_1$ do not decrease. Let $s_i$ be the highest indexed student being moves from $c_{j-1}$ to $c_j$ in Demote. By Observation \ref{obs:order}, $s_i$ will be the leftmost  in the leximin tuple of the students being moved. Consequently, whether there is an increase in the leximin value depends on the values of $s_i$ and $c_j$ before and after the demote procedure is executed.  


We shall say that the matching of a \nameA $\A_i$ (or a \nameB $\B_j$) is \textit{fixed} if it will not change any further during the execution of FaSt. 
The algorithm stops when one of the two happens: either the bottom $m-1$ \nameBs: $\B_2,\cdots, \B_m$ get fixed, or the $\B_1$ is matched to $\A_1$ only. Note that, every time a \nameA is sent to a lower ranked college or demoted, no \nameA sees an increase in their valuation and no \nameB, other than $\B_1$, sees a decrease in their valuation. Hence, whenever a \nameA is demoted, their position in the leximin tuple either moves to the left or stays in the same position, and the position of the corresponding \nameB moves to the right. Due to Observation \ref{obs:order}, the valuations of the previously matched \nameAs are unaffected. Hence, the algorithm optimizes for one leximin position at a time, before moving on to the next.

Demote (Algorithm \ref{alg:demote}) is maintains the invariant of a complete stable matching. If we decide to send \nameA $\A_i$ to \nameB $\B_j$ when they are currently matched to $\B_{j-1}$, it is not enough to simply do this one step. Doing this alone will make $\B_{j-1}$ unmatched, which violates our invariant. As a result, we need to match \nameA $\A_{i-1}$ to $\B_{j-1}$, this must continue till we send $\B_1$'s lowest ranked matched \nameA to $\B_2$. For this to be feasible, $\B_1$ must be matched to at least $2$ students. If not, no transfers are possible and as a result, no further improvement can be made to the leximin tuple. We ensure this feasibility in the \texttt{while} loop condition in Step 7 of FaSt. 

\begin{algorithm}[!ht]{
    \KwIn{A matching $\mu$, \nameA index $i$, \nameB indices $down$ and $up$.}
    \KwOut{$\mu$}
    //Move $s_i$ to $c_{down}$ while reducing the number of students matched to $c_{up}$ while maintaining the number of students matched to all other colleges. 
    Set $t\gets i$\;
    Set $p\gets down$\;
    \While{$p>up$}{
        $\mu(\B_{p-1})\gets \mu(\B_{p-1})\backslash \{\A_t\}$\;
        $\mu(\B_{p})\gets \mu(\B_{p})\cup \{\A_t\}$\;
        $t\gets t-1$, and 
        $p \gets p-1$\;
    }}
    \caption{Demote}\label{alg:demote}
    \end{algorithm}

 \begin{algorithm}[!ht]
    {
     \KwIn{Instance of ranked isometric valuations $\langle \symbA,\symbB,V\rangle$}
     \KwOut{$\mu$}
     Initiate a stable matching: $\mu(\B_1)\gets \{\A_1,\cdots, \A_{n-m+1}\}$ and $\mu(\B_j)\gets \{\A_{n-(m-j)}\}$ for $j\geq 2$\;
     Initialize $i\gets n-1$, $j \gets m$ \;
     Set $\mathcal{L}$ as the leximin tuple for $\mu$\;
     Set $pos[i]$ as the position of $\A_i$ in $\mathcal{L}$, $i\in [n]$\;
     $//$tie breaking for position in $\mathcal{L}$: agents who attain the same value are listed in increasing order of rank, with \nameA before \nameBs.
     
     Initialize $\mathcal{F}\gets \{\A_n\};\,//$ stores the agents whose matching is fixed.\\
     \While{$i>j-1$ AND $j>1$ }{
     \eIf{$v_{j}(\mu)\geq V_{i(j-1)}$}
     {
        $j\gets j-1$;
     }
     {
     
      \eIf{$[V_{ij}>v_{j}(\mu)]$}{
       $\mu \gets Demote(\mu,i,j,1)$\;
       }{
      \eIf{$V_{ij}<v_{j}(\mu)$}{
        $j\gets j-1$\;
      }{
        // Look ahead: does sending $\A_i$ to $\B_j$ improve leximin\\
        $k\gets i$,\, 
        $t\gets pos[i]$, and
        $\mu'\gets \mu $\;
        \While{$k>j-1$}{
            \eIf{$V_{kj}>\mathcal{L}[t] $}{
                $i\gets k$ and  $\mu \gets Demote(\mu',k,j,1)$\;
                \textbf{break}\;
            }{
                \eIf{$V_{ij}<v_{j}(\mu)$}{
                    $j\gets j-1$\;
                    \textbf{break}\;
                }{
                    //Another tie, send $\A_k$ to $\B_j$ tentatively\\
                    $\mu'\gets Demote(\mu',k,j,1)$\;
                    $k\gets k-1$, and $t\gets t+1$\;
                }
      
            }
            }
            \If{$k=j-1$ AND $\mu\neq \mu'$}
                    {$j\gets j-1$\;}
      }
      }
     }
    
    $\mathcal{F}\gets \{\A_i,\cdots,\,\A_n\}\cup \{\B_{j+1},\cdots \B_m\}$\;
    $Update(\mathcal{L},\mu,pos)$\;
    $i\gets i-1$\;
    }
    }
     \caption{FaSt}
    \label{alg:fast}
    \end{algorithm}

\algcorrect*
    \begin{proof}
    We first prove that the correctness of FaSt.
    
\noindent \textbf{Correctness}: During initialization, FaSt matches $\A_n$ to $\B_m$, and fixes the match by assigning $\A_n$ to the set $\mathcal{F}$. This step indicates that $\A_n$ remains matched to $\B_m$ throughout the execution of the algorithm. Recall that, by Observation \ref{obs:order}, for the leximin optimal stable matching $\mu^*$, the first entry in its leximin representation $\mathcal{L}(\mu^*)$ is $V_{nm}=u_{n}(\B_m)$, which is ensured by FaSt.

We shall say that the matching of a \nameA $\A_i$ (or a \nameB $\B_j$) is \textit{fixed correctly} if the it is matched as in a leximin optimal matching. Let the matchings of $\A_{t+1},\cdots, \A_n$ be fixed, $t\leq n-1$ and let $\A_{t+1}$ be matched to $\B_d$. Hence, the matchings of $\B_{d+1},\cdots, \B_m$ are also fixed correctly. 
    
By Observation \ref{obs:order}, $\B_1,\cdots, \B_d$ and $\A_1,\cdots, \A_{t-1}$ must occur to the right of $\A_t$ in the leximin tuple of the leximin optimal stable matching. Similarly, $\A_{t+1},\cdots, \A_n$ must be listed to the left of $\A_t$. This will not change, irrespective of the way the other agents are matched, as long as stability is maintained. Also, since $\A_t$'s valuation for $\B_d$ does not depend on how any other agent is matched, it will not change once it is fixed. As a result, $\mu^*$ matches $\A_t$ to $\B_d$ if and only if it results in a leximin dominating matching.

By sending $\A_t$ to $\B_d$, $\A_t$ moves to the left of its current position in the leximin tuple. To improvement the leximin value, its value at the new position should not be lower than the current value there. Thus, if $V_{td}$ is less than $\B_d$'s current valuation, $\A_t$ must be fixed to $\B_{d-1}$, the matching of $\B_d$ must be matched as in the current matching. Even if we add more \nameAs to $\B_d$'s matching, $\A_t$ will remain at the same position, due to Lemma \ref{lem:struct} and as a result will be leximin inferior to the current matching. 
        
In the case where these two values are equal, we must look ahead to see whether by demoting $\A_t$ we eventually result in an increased leximin value. Simply comparing $\B_d$'s new valuation need not suffice. By demoting $\A_t$ we may land in a setting where $\B_d$'s new valuation is less than $V_{t(d-1)}$, resulting in a matching which is leximin inferior to the current. However by sending sending more \nameAs to $\B_d$, we may improve upon the current leximin tuple. Let the position of $\A_t$ when matched to $\B_d$ be $k$. Thus, we must compare the new leximin values (obtained by sending $\A_{t+1}$ to $\B_d$) at position $k+1$ with the current leximin value at position $k+1$. If the values are equal we may have to look ahead further and accordingly fix the matching. Thus, when the matchings of $\A_{t+1}$ and $\A_n$ are fixed we can correctly fix $\A_t$ as in a leximin optimal matching.
    
    \noindent \textbf{Termination}: The algorithm considers each \nameA- \nameB pair at most once. All computations  during one such considerations can be done in constant time. 
    The update to the leximin tuple $\mathcal{L}$ and the array $pos$ can be done in time $O(n+m)=O(n)$ as well. Thus, the time taken is $O(mn)$ in the worst case, which is linear in the number of edges of the underlying bipartite graph..
    \end{proof}
    
\noindent 
The success of this approach is contingent on i) the inherent rankings, ii) valuation functions of the \nameBs being additive, and iii) the five observations listed earlier. We do not rely  directly on the fact that the valuations are isometric. Consequently, whenever we have these three requirements met in a matchings instance, we can use FaSt as is to find the leximin optimal stable matching. Essentially, FaSt works correctly for the space of ranked instances where the valuations are additive and follow the following restriction: $u_{i}(\B_j)\leq v_{j}(\A_i) $ and $u_1(\B_j)\geq u_2(\B_j)\geq \cdots \geq u_n(\B_j)$ for all $i\in [n]$ and $j\in [m]$. The extension of FaSt for instances without the assumption that $b_j=n-1$ is detailed in Appendix \ref{sec:cap}.

\subsection{General Ranked Valuations}
\label{sec:beyond}

We now discuss the algorithm to find the leximin optimal stable matching under general ranked valuations where values across an edge of the bipartite graph need not be the same. For ease of presentation, we shall assume that the capacity of each $\B_j$, $b_j=n-1$. As in the case of isometric valuations this is not a particularly simplifying assumption.  The full algorithm without this assumption is given in Appendix \ref{sec:cap}. The time complexity of the two algorithms is the same.

We no longer assume any relation between the $u_i$s and $v_j$s. Consequently, Observations \ref{obs:order} and \ref{obs:fixed} stated in the start of the section, need not hold any longer. Hence, for the general ranked setting, the approach followed in Algorithm \ref{alg:fast} will no longer be applicable to find a leximin optimal stable matching, beyond the case when $m=2$.

The approach behind FaSt, essentially starts with the \nameA optimal stable matching and increases the number of \nameAs matched to the lowest ranked \nameB till there is a decrease in the leximin value. Then it moves to the next lowest ranked \nameB and repeats. Without Observation \ref{obs:fixed}, we may have that the \nameBs lie to the left of the \nameAs in the leximin tuple. As a result, we may continue to increase the valuation of the lowest ranked \nameB without much increase in the valuations of other \nameBs which may now lie to the left of $\B_m$. Consequently, we may end up only balancing the values of $\B_1$ and $\B_m$, not finding the leximin optimal. 

\subsection{FaSt-Gen}


For the general ranked valuations setting, we propose another approach that starts with the student optimal stable matching. In each iteration, we increase the number of \nameAs matched to the leftmost unfixed \nameB by one if it increases the leximin value. If not, the algorithm fixes the upper limit of this \nameB and the lower limit of the next highest ranked one. In order to do this, we decrease one \nameA from a higher ranked \nameB using the demote procedure (Algorithm \ref{alg:demote}). The choice of this higher ranked \nameB is $\B_1$ initially, till it can no longer give out any more students, then we consider the next \nameB whose lower limit is not fixed. Thus, the algorithm gradually fixes the upper and lower limits of the \nameAs matched to each college.

There is one more subtle feature to note. In the absence, of Observation \ref{obs:order}, when increasing the number of \nameAs of a particular \nameB a leximin decrease may happen due to a higher ranked \nameA- \nameB pair. That is, a \nameA which is being moved, but not to the lowest unfixed \nameB may cause a leximin decrease. In such cases, we temporarily fix or ``soft fix'' the upper limits of some \nameBs and then unfix them, once the \nameB which caused the leximin decrease is starts giving out \nameAs currently matched to it. We  assume a routine we call $sourceDec(\mu_1,\mu_2)$ which returns the agent who is the cause of the leximin decrease in $\mu_1$ vs $\mu_2$.
\begin{algorithm}[!ht]{
    \KwIn{Instance of general ranked valuations $\langle \symbA,\symbB,U,V\rangle$}
    \KwOut{$\mu$}
    //Initiate a stable matching:\; 
    $\mu(\B_1)\gets \{\A_1,\cdots, \A_{n-m+1}\}$ and $\mu(\B_j)\gets \{\A_{n-(m-j)}\}$ for $j\geq 2$\;
    $UpperFix \gets \{\B_1\},\,\, LowerFix \gets \{\B_m\}$\;
    $SoftFix \gets \emptyset$\; $Unfixed \gets UpperFix^c$\;
    
    \While{$|UpperFix \backslash LowerFix| + |LowerFix|<m$}{
        $up\gets \min_{j\notin LowerFix } j$\;
        $down \gets \argmin_{j\in Unfixed} v_j(\mu)$.\;
        $SoftFix \gets SoftFix \backslash \{(j,j')| j'\leq up < j \}$\;
        \eIf{$|\mu(\B_{up})|=1$ OR $\,v_{up}(\mu)\leq v_{down}(\mu)$}{
            $LowerFix\gets LowerFix \cup \{\B_{up}\}$\;
        }
        {
            $\mu' \gets Demote(\mu,down,up)$\;
            \eIf{$\mathcal{L}_{\mu'}\geq\mathcal{L}_{\mu}$}
            {
                $\mu \gets \mu'$\;
            }
            {//Decrease in leximin value, need to check the source of the decrease\;
            \eIf{$sourceDec(\mu',\mu)=\B_{up}$}
            { 
                $LowerFix\gets LowerFix \cup \{\B_{up}\}$ \label{eq:a}\;
                $UpperFix\gets UpperFix \cup \{\B_{up+1}\}$
            }{
                \eIf{$sourceDec(\mu',\mu)\in \symbA$}
                {
                    $\B_t\gets \mu(sourceDec(\mu',\mu))$\;
                    $LowerFix\gets LowerFix \cup \{\B_t\}$\;
                    $UpperFix\gets UpperFix \cup \{\B_{t+1}\}$\;
                    $A\gets \{j| j>t+1, j\in Unfixed \}$\;
                    $SoftFix \gets SoftFix \cup (A\times \{t+1\})$\;
                }
                {//The source of the decrease is a \nameB which is still unfixed. Need to check if this may lead to an eventual leximin increase
                    $(\mu,LowerFix,UpperFix,SoftFix) \gets LookAheadRoutine(\mu, down, LowerFix$, $UpperFix$, $SoftFix)$\;
                }
            }
            
            }
        }
        $Unfixed \gets \{j|j\notin UpperFix $ or $ (j,j')\notin SoftFix$ for some $j'>j\}$\;
        
    }}
    \caption{FaSt-Gen}\label{alg:left}
    \end{algorithm}

Note that the upper limit of a \nameB $\B_j$  is fixed permanently  when:

\begin{itemize}
    \item Lowest ranked \nameA matched to $\B_{j-1}$ causes a decrease in the leximin value on being matched $\B_j$ OR
    \item $up=j-1$ and $\B_{j-1}$ causes a leximin value decrease
\end{itemize}

\noindent This includes the case when $|\mu(\B_{up})|=1$. The lower limit of a \nameB $\B_j$ is fixed when: 

\begin{enumerate}
    \item $\B_{j+1}$'s upper limit is fixed
    \item $j=up$ and the valuation of $\B_j$ becomes less than that of the any unfixed \nameB, OR
    \item Giving out any more \nameAs would cause a leximin decrease, This includes the case when $|\mu(\B_{j})|=1$. 
\end{enumerate}
\noindent We now prove the correctness of the algorithm. 

\alggen*

\begin{proof}
We induct on $t\in [m]$, the rank of the \nameBs. For $j\in [m]$, let $l_j$ and $h_j$ indicate the ranks of the lowest and highest ranked \nameAs matched to $\B_j$.

\textbf{Base Case: t=1} When $t=1$, $h_1=1$, that is, $\A_1$ is the highest ranked \nameA matched to $\B_1$. This follows from Observation \ref{obs:first-last}.
Clearly, $\A_{(l_1)+1}=\A_{h_2}$. Recall that $\A_{h_2}$ was not included in $\mu(\B_1)$ as it would be leximin inferior to giving it out. For any more \nameAs to be included in $\mu(\B_1)$ would thus be leximin inferior, thus no leximin optimal matching would match $\B_1$ to more \nameAs.

Now $\B_1$'s lower limit was fixed because either: i) $\A_{l_1}$'s valuation for $\B_2$ would be too low, OR ii) $\B_1$ could no longer give out anymore \nameAs without causing a leximin decrease.
Consequently, any matching where $\A_{l_1}$ is not matched to $\B_1$  will be leximin inferior to the current , restricted to the values of the first two \nameBs and $\{\A_1,\cdots,\, \A_{l_2}\}$, as in all of those matchings, the valuations of $\A_{l_1}$ and $\B_1$ will always be strictly lower. Thus, $c_1$'s lower limit is fixed correctly. 

\textbf{Induction Hypothesis:} Let the matching of ${\B_1,\cdots,\, \B_{t-1}}$ be fixed as in an optimal matching. As a result, any matching which does not match ${\B_1,\cdots,\, \B_{t-1}}$ would be leximin inferior to those that do with respect to the matchings of the first $t$ \nameBs and $\{\A_1,\cdots,\,\A_{l_t}\}$.

Now consider $\B_t$. From Induction Hypothesis, $\B_t$'s upper limit is fixed correctly. There are now three possible cases: 

\textbf{Case 1}: $t=m$. In this case, $l_t=l_m=n$ which is clearly correct. 

\textbf{Case 2}: $t< m$ and at no point in the execution of the algorithm does $up=t$. In this case, $\B_t$'s lower limit has been fixed in Step \ref{eq:a} of FaSt-Gen. 
That is, $\A_{l_t}$'s valuation for $\B_{t+1}$ was too low. Thus any matching which matches $\A_{l_t}$ to a lower ranked \nameB will be leximin inferior to the current one with respect to the matchings of the first $t+1$ \nameBs and $\{\A_1,\cdots,\,\A_{l_{t+1}}\}$. 

\textbf{Case 3}: At some point in the execution $up=t$. Analogously to the base case, changing the matching of $\B_t$ decreases the leximin value.

\noindent Hence, the lower limit of $\B_t$ is as in an optimal matching.
\end{proof}
\begin{algorithm}[!ht]{
    \KwIn{$I,\mu,down,LowerFix,UpperFix,SoftFix$}
    \KwOut{$\mu,LowerFix,UpperFix,SoftFix$}
    $\langle \mu', LF, UF \rangle \gets \langle \mu,\, LowerFix,\, UpperFix\rangle $\;
    \While{$|LF|+|UF\backslash LF|<m$}{
        $up\gets \min_{j\notin LowerFix } j$\;
        \eIf{$|\mu(\B_{up})|=1$ OR $v_{up}(\mu)\leq v_{down}(\mu)$}{
            $LF\gets LF \cup \{\B_{up}\}$\;
        } 
        {
            $\mu' \gets Demote(\mu',up,down)$\;
            \eIf{$\mathcal{L}(\mu')\geq\mathcal{L}(\mu)$}
            {
                $\mu \gets \mu'$, 
                $LoweFix\gets LF$, 
                $UpperFix \gets UF$\;
                \textbf{break}\;
            }
            {//Decrease in leximin value, need to check the source\;
            \eIf{$sourceDec(\mu',\mu)=\B_{up}$}
            { 
                $LF\gets LF \cup \{\B_{up}\}$, $UF \gets UF \cup \{\B_{up+1}\}$\;
            }{
                \If{$sourceDec(\mu',\mu)\in \symbA$}
                {
                    $\B_t\gets \mu'(sourceDec(\mu',\mu))$\;
                    \eIf{$t=down$}{
                        $UpperFix\gets UpperFix \cup \B_{down}$\;
                    }
                    {
                         $SoftFix \gets SoftFix \cup (down,t)$\;
                    }
                    \textbf{break}\;
                    
                }
                
            }
            
            }
            
            }
    }}
    \caption{LookAhead Routine}\label{alg:genlook}
    \end{algorithm}
\noindent \textbf{Look Ahead Routine:} 
As with isometric valuations, it may be possible for a decrease in the leximin value to be made up for. We may encounter a setting where the valuation of a demoted \nameA decreases to the  former valuation of $\B_{down}$. The valuation of $\B_{down},$ while it does increase, it is less than the \nameA's former valuation, causing a leximin decrease. We must decide if it is possible for $\B_{down}$ to make up for this decrease in valuation. Hence, we must look ahead till  one of the following occurs: i) There is a leximin increase, ii) There is a leximin decrease due to a \nameA OR iii) No more \nameAs can possibly be added to $\B_{down}$'s matching.

In the first case, we must proceed with the new matching, and in the last we must revert to the old one, and fix $\B_{down}$'s upper limit. In the case that there is a leximin decrease due to a \nameA being matched to $\B_{down}$, then clearly, we must revert to the old matching and fix $\B_{down}$'s upper limit. The last possibility is that there is a leximin decrease due to a \nameA matched to a higher ranked \nameB. In this case, we have that so far our current matching continues to be leximin inferior to the old one, and for now we cannot add any more \nameAs to $\B_{down}$. Thus, as in the case when this happens outside of a look ahead routine, we temporarily fix $\B_{down}$ till the offending \nameB is open for giving out \nameAs.

 \textbf{Time Complexity}: Without Observations \ref{obs:order} and \ref{obs:fixed}, each leximin comparison takes $O(m+n)=O(n)$ time. Updating a matching takes $O(m)=O(n)$ time. Whenever a particular \nameA - \nameB is considered, a leximin comparison of the updated matching must be done. Pair $(\A_i,\B_j)$ may be considered  at most $j-1\leq m-1$ times. Thus the algorithm considers $mn$ pairs at most $m$ times each taking $O(n)$ time. Consequently, the time complexity is $O(m^2n^2)$.  
Even in the presence of capacity constraints, it is possible to find the leximin optimal stable matching in time $O(m^2n^2)$. The details of the algorithm are given in Appendix \ref{sec:cap}.

\subsection{Strict Preferences and Constant Number of Colleges $m=2$}
We now show that when all agents' preferences are strict (but without the assumption of rankings) and there are exactly two colleges, even with an unrestricted number of students, we have a polynomial time algorithm to find the leximin optimal stable matching. Observe that, in this setting, when $m=2$, any student $s_i$ only forms a blocking pair with her most preferred college. As a result, for a matching $\mu$ to be stable, we need that for each student $s_i$, either $s_i$ is matched to their favourite college (say $c_j$) or $c_j$ prefers all the students matched to it under $\mu$ over $s_i$.
A simple corollary of this is that the matching where all students are matched to their preferred college is stable. This is the student optimal stable matching. 


To this end, let us set up some notation for this specific case of $m=2$. 
We use $c_{-j}$ to denote the college other than $c_j$. We shall use the function $\alpha$ to capture the students' preference relations. As $m=2$, we need to only know a student's most preferred college to determine their preference relation. To this end, we use $\alpha (s_i)$ to denote $s_i$'s (most) preferred college and $-\alpha(s_i)$ to denote the college other than $\alpha(s_i)$. For $j=1,2$, we shall define the sets $A_j=\{ s_i|\alpha (s_i)=c_j\}$. Observe that $A_1$ and $A_2$ partition the set of students, and hence the complement of $A_1$, i.e. $\overline{A}_1=A_2$ and similarly, $\overline{A}_2=A_1$. For a matching $\mu$, let $k_j(\mu)$ denote the number of students from $A_j$ matched to $c_j$ under $\mu$, i.e., $k_j(\mu)=\mu(c_j)\cap A_j$. Now let $l_j(\mu)$ denote the number of the rest, i.e., $l_j(\mu)=|\mu(c_j) \setminus A_j|=|\mu(c_j)|-k_j$. Whenever, the matching being referred to is clear, we shall drop the $(\mu)$.

We shall use the functions $top_j$ and $bottom_j$ to take as input a set of students and a number $k$ and return college $c_j$'s most preferred and least preferred $k$ students from that set, respectively. Whenever $k>1$, we shall use it as $top_j(S,,k)$, and when $k=1$, we shall simply write $top_j(S)$. Now for a matching $\mu$ to stable, for every $s_i \in A_j\setminus \mu(c_j)$, that is every student who prefers $c_j$ but is not matched to it under $\mu$, no student that $c_j$ prefers less than $s_i$ should be matched to it under $\mu$. This means that, the students from $A_j\cup\mu(c_j)=top_j(A_j,k_j(\mu))$. Similarly, those not in $A_j$ must be the $l_j$ least preferred students of $c_{-j}$, that is, $\mu(c_j)\setminus A_j = bottom_{-j}(\overline{A}_j,l_j(\mu))$. Further, out of all these students from $\overline{A}_j$, $c_j$'s least preferred one should still give $c_j$ more value than the most preferred student from $A_j$ matched to  $c_{-j}$. otherwise, there will be a blocking pair. This is captured in the following observation. 

\begin{observation}\label{obs:m2stable}
Given matchings instance $I=<\symbA,\symbB,U,V>$ s.t. $m=2$, a matching $\mu$ is stable if and only if the following hold. For each $j=1,2$, $\mu(c_j)=top_j(A_j,k_j)\cup bottom_{-j}(\overline{A}_{j},l_j)$. Further, it also must hold that if for $c_j$, $l_j>0$, then $$v_j(top_j(A_j \setminus \mu(c_j))))<v_j (bottom_j( \mu(c_j) \setminus A_j)).$$
\end{observation}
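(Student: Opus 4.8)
The plan is to exploit the special structure of the $m=2$ setting already noted before the statement: since every student has exactly one preferred college, a student can prefer a college to her current match only when that college is her favourite. Hence a pair $(s_i,c_j)$ can block $\mu$ only if $c_j=\alpha(s_i)$ and $s_i\notin\mu(c_j)$, i.e.\ only students sitting at their non-preferred college can participate in a blocking pair, and only with their favourite. The whole proof reduces to the following reformulation of stability: $\mu$ is stable if and only if, for every $j$ and every $s\in A_j\setminus\mu(c_j)$, the college $c_j$ prefers every student in $\mu(c_j)$ to $s$, that is $v_j(s)<v_j(s'')$ for all $s''\in\mu(c_j)$. I work throughout with complete matchings (every student matched to one of the two colleges), consistent with the partition $A_1\cup A_2$ of the students used above; this is exactly what lets me argue that a student of $\overline{A}_j$ not kept by $c_{-j}$ must land in $c_j$.

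For the forward direction I assume $\mu$ stable and derive the two conditions. Fixing $c_j$, I first show $\mu(c_j)\cap A_j=top_j(A_j,k_j)$: if some $s'\in A_j\setminus\mu(c_j)$ were preferred by $c_j$ to some $s\in\mu(c_j)\cap A_j$, then $(s',c_j)$ would block, since $s'$ prefers $c_j$ and (by completeness) is matched to $c_{-j}$, a contradiction; so the $A_j$-students kept at $c_j$ are its $k_j$ most preferred. Applying the same argument to $c_{-j}$ gives $\mu(c_{-j})\cap A_{-j}=top_{-j}(A_{-j},k_{-j})$, and by completeness the remaining $A_{-j}$-students, exactly the $c_{-j}$-worst $l_j=|A_{-j}|-k_{-j}$ of them, are those in $\mu(c_j)\setminus A_j$, giving $\mu(c_j)\setminus A_j=bottom_{-j}(\overline{A}_j,l_j)$. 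This is Condition~1. For Condition~2, suppose $l_j>0$, so $\mu(c_j)\setminus A_j\neq\emptyset$. Any $s\in A_j\setminus\mu(c_j)$ satisfies $v_j(s)<v_j(s'')$ for all $s''\in\mu(c_j)$ by stability; the constraints from $s''\in\mu(c_j)\cap A_j$ are automatic since $s$ lies below $c_j$'s top $k_j$, so the only binding constraints come from $s''\in\mu(c_j)\setminus A_j$. Taking $s$ to be the $c_j$-best element of $A_j\setminus\mu(c_j)$ and $s''$ the $c_j$-worst element of $\mu(c_j)\setminus A_j$ yields exactly $v_j(top_j(A_j\setminus\mu(c_j)))<v_j(bottom_j(\mu(c_j)\setminus A_j))$.

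For the converse I assume Conditions~1 and~2 and rule out every blocking pair. By the opening reduction it suffices to check pairs $(s,c_j)$ with $s\in A_j\setminus\mu(c_j)$, and I must show $c_j$ prefers every member of $\mu(c_j)$ to $s$. Condition~1 makes $\mu(c_j)\cap A_j$ equal to $c_j$'s top $k_j$ students of $A_j$; since $s$ is not among them, $c_j$ prefers all of these to $s$. For the imported students $\mu(c_j)\setminus A_j$, present only when $l_j>0$, Condition~2 gives $v_j(s)\le v_j(top_j(A_j\setminus\mu(c_j)))<v_j(bottom_j(\mu(c_j)\setminus A_j))\le v_j(s'')$ for every $s''\in\mu(c_j)\setminus A_j$. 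Hence no such pair blocks and $\mu$ is stable.

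The main obstacle I anticipate is pinning down Condition~2 and its degenerate cases: I must argue that among all the constraints $v_j(s)<v_j(s'')$ the extremal pair (the $c_j$-best lost $A_j$-student against the $c_j$-worst imported $\overline{A}_j$-student) is the single binding one, and handle the cases where $A_j\setminus\mu(c_j)$ or $\mu(c_j)\setminus A_j$ is empty, so that $top_j$ or $bottom_j$ is taken over an empty set and the inequality is read vacuously. The other delicate point is the explicit use of completeness of $\mu$ to identify $\mu(c_j)\setminus A_j$ as precisely $bottom_{-j}(\overline{A}_j,l_j)$ rather than merely a subset of $c_{-j}$'s least-preferred $\overline{A}_j$-students.
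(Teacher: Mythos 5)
Your proof is correct and follows essentially the same route as the paper, which justifies this observation only through the informal discussion immediately preceding it: the reduction that for $m=2$ a student can block only with her favourite college, the resulting $top_j$/$bottom_{-j}$ structure of $\mu(c_j)$ via completeness, and the single binding cross-inequality between $top_j(A_j\setminus\mu(c_j))$ and $bottom_j(\mu(c_j)\setminus A_j)$. Your write-up is in fact more careful than the paper's, since you make the completeness assumption explicit, prove both directions of the equivalence, and flag the vacuous readings when $A_j\setminus\mu(c_j)$ or $\mu(c_j)\setminus A_j$ is empty.
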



We can now use this observation to show a method of going from any one stable matching to another, by toggling the matching of one student at a time and none of the intermediate matchings violating stability. This can be achieved as follows: Let the two matchings be $\mu_1$ and $\mu_2$. For each $j\in \{1,2\}$ s.t. $k_j(\mu_1)<k_j(\mu_2)$, toggle the matching of the students matched to $c_j$ under $\mu_2$ but not $\mu_1$ in decreasing order of $c_j$'s preference. That is, we toggle the matchings of students in $(\mu_2(c_j) \setminus \mu_1(c_j))\cap A_j$. Observe that this does not violate stability. After this is done, for each college $c_j$ such that it is matched to more students from $\overline{A}_j$ under $\mu_2$ than $\mu_1$, that is, $l_j(\mu_1)<l_j(\mu_2)$, bring in the students in $(\mu_2(c_j) \setminus \mu_1(c_j))\cap \overline{A}_j$ in increasing order of $c_{-j}$'s preference. The resultant matching is $\mu_2$. Consequently, we can go from any one stable matching to another without violating stability. 

We shall make use of this structure to find the leximin optimal stable matching. Our algorithm will proceed as follows: we first match all students to their preferred college. We then transfer out students from the rightmost college on the leximin tuple to the leftmost till we encounter one of the three possible settings: 

\begin{enumerate}
    \item leximin decrease due to a student
    \item violation in stability
    \item needing to undo a previous transfer
\end{enumerate}

All three of these conditions will be captured by a set $\mathcal{F}$ of forbidden pairs, where any student college pair whose being matched would cause any one of these three conditions are stored. We assume a function that toggles the matching of a student named $Toggle$ and it takes as input the student and the matching. We shall use $\mu_P$ to store the matching with the best leximin value discovered so far. We can use this as an alternate to having a separate lookahead procedure, as there are exactly 2 colleges. 

\begin{algorithm}[!ht]{
    \KwIn{Instance $\langle \symbA,\symbB,U,V\rangle$ with $m=2$ }
    \KwOut{$\mu_P$}
    $\mu(s_i)\gets \alpha (s_i)$ for each $i\in [n]${\footnotesize \Comment{Initialize to a stable matching}}\;
    $\ell  \gets \argmin_{j=1,2} v_j(\mu)$, $r \gets -\ell $\;
    $\mathcal{F}\gets \{(s_i,-\alpha (s_i))| u_i(-\alpha (s_i))<v_{\ell}(\mu)\}${\footnotesize \Comment{Prevents irreversible leximin decrease}}\;
    $\mu_P \gets \mu$\;
    \While{$(bottom_{r}(\mu(r)),\ell )\notin \mathcal{F}$ }{
        $i^* \gets bottom_{r}(\mu(r))$, $j^*\gets r$\;
        $\mu \gets Toggle(\mu,s_{i^*})$\;
        $\mu_P \gets \argmax \{\mathcal{L}_{\mu},\mathcal{L}_{\mu_P}\}$\;
        $\ell \gets \argmin_{j=1,2} v_j(\mu)$, $r \gets -\ell $\;
        $\mathcal{F}\gets \mathcal{F} \cup \{(s_i,-\alpha(s_i))| u_i(-\alpha(s_i))<v_{\ell }(\mu) \}$ {\footnotesize \Comment{Prevents irreversible leximin decrease}}\\
        \quad \quad \quad \,$\,\cup \{(s_i,c_{j^*})| v_{j^*}(s_{i^*})\leq v_{j^*}(s_{i^*})\}${\footnotesize \Comment{Prevents stability violation and undoing this iteration}}
    }
    }
    \caption{FaSt-Const}\label{alg:m2}
    \end{algorithm}

Note that students are moved from the right most college to the leftmost. Doing the opposite, will not lead to an increase in leximin value at any time. We now show that the student-college pairs that are forbidden are never matched in a leximin optimal stable matching. 

\begin{lemma}\label{lem:forbid}
Given  a matchings instance $I=\langle \symbA,\symbB,U,V\rangle$ with $m=2$, let $\mathcal{F}$ be as in the end of Algorithm \ref{alg:m2} when run on $I$. A leximin optimal stable matching on $I$ does not match any $s_i,c_j$, s.t. $(s_i,c_j)\in \mathcal{F}$.
\end{lemma}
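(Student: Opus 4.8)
The plan is to analyze the two types of forbidden pairs added to $\mathcal{F}$ separately and show that in either case, matching the offending pair in any stable matching would contradict leximin optimality. There are exactly two ways a pair $(s_i,c_j)$ enters $\mathcal{F}$: first, when $u_i(-\alpha(s_i))<v_\ell(\mu)$ for some intermediate matching $\mu$ encountered during the run (the ``irreversible leximin decrease'' clause), and second, when $(s_i,c_{j^*})$ is added because $s_{i^*}$ was just toggled away from $c_{j^*}=c_r$ (the ``stability violation / undoing'' clause). I would handle the first type by exploiting Observation~\ref{obs:m2stable} together with the monotone structure of the toggling process: since the algorithm always moves the \emph{least} preferred student of the right college to the left, the value $v_\ell(\mu)$ at which a pair is forbidden is a threshold that the worst-off college's value has already reached. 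Matching $s_i$ to $-\alpha(s_i)$ would then place $s_i$ at value $u_i(-\alpha(s_i))$, which lies strictly below this threshold; I would argue that the leximin-optimal matching can do strictly better at that coordinate by instead keeping $s_i$ with $\alpha(s_i)$, since the value $v_\ell(\mu)$ is realizable and dominates.

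For the second type of forbidden pair, the key is the stability structure from Observation~\ref{obs:m2stable}. When $s_{i^*}=bottom_r(\mu(r))$ is toggled out of $c_r=c_{j^*}$, any student $s_i$ that $c_{j^*}$ values at most as much as $s_{i^*}$ (i.e.\ $v_{j^*}(s_i)\le v_{j^*}(s_{i^*})$) cannot subsequently be matched to $c_{j^*}$ in a stable matching that also respects the transfers already made, because doing so would either recreate a blocking pair (violating the second condition of Observation~\ref{obs:m2stable}, which demands that the least-preferred imported student still beat the most-preferred exported one) or would amount to undoing a transfer the algorithm has committed to as leximin-improving. I would make this precise by showing that the toggling order (decreasing preference for students of $A_j$, increasing preference for imported students) is forced by stability, so once $s_{i^*}$ leaves $c_{j^*}$, no less-preferred student can legitimately occupy that slot.

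The main obstacle I expect is the first, ``irreversible'' clause: I need to argue not merely that matching $s_i$ to $-\alpha(s_i)$ gives $s_i$ a low value, but that \emph{no} stable matching realizing this pair can be leximin optimal, which requires exhibiting a competing stable matching (or the structure of the optimal one) that strictly dominates at the relevant coordinate. The subtlety is that the threshold $v_\ell(\mu)$ is computed at an intermediate point of the algorithm, so I must verify it is a genuine lower bound on what the optimum achieves for the worst-off agent up to that stage, i.e.\ that the algorithm's greedy left-to-right filling never overshoots the optimum's worst coordinate. I would establish this by an invariant argument: at every iteration, $\mu_P$ (the best matching seen so far) agrees with a leximin-optimal matching on all coordinates already ``locked in'' below the current threshold, so any pair forbidden because it falls below the threshold must indeed be absent from the optimum. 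Combining the two cases yields that no forbidden pair appears in any leximin optimal stable matching, completing the proof.
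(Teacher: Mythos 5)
Your overall plan---a case analysis on which clause of Algorithm~\ref{alg:m2} put $(s_i,c_j)$ into $\mathcal{F}$, Observation~\ref{obs:m2stable} for the stability clause, and domination by the best matching seen so far---is broadly the paper's strategy (the paper splits your second type into two subcases, $s_i=s_{i^*}$ and $v_{j^*}(s_i)<v_{j^*}(s_{i^*})$, and reduces the latter to the former via Observation~\ref{obs:m2stable}: stability forces $s_{i^*}$ to be matched to $c_{j^*}$ whenever a student $c_{j^*}$ prefers less is). However, your second case has a genuine gap. You only exclude the forbidden pair from stable matchings ``that also respect the transfers already made,'' and you justify irreversibility by saying the algorithm ``committed to'' each transfer ``as leximin-improving.'' That mischaracterizes FaSt-Const: toggles are executed even when they strictly \emph{decrease} the leximin value---this is precisely why $\mu_P$ is maintained as a running $\argmax$, serving as the built-in lookahead---so individual transfers carry no improvement certificate. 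Since the lemma quantifies over \emph{all} leximin-optimal stable matchings, including ones off the algorithm's trajectory, you must lift your restriction. The paper does this with an exhaustiveness argument you are missing: by the single-toggle connectivity of the stable-matching space for $m=2$ (established just before the algorithm), any stable matching containing the forbidden pair either has already been visited by the monotone right-to-left sweep, hence has leximin value at most $\mathcal{L}_{\mu_P}$, or is reachable only by moving students from the leftmost college of the leximin tuple to the rightmost, which can never increase the leximin value. Without this ``explored or dominated'' dichotomy, your argument does not rule out an optimum that simply disagrees with the algorithm's transcript.

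A secondary problem lies in your treatment of the threshold clause. Your plan rests on a monotonicity assumption---that the worst-off college's value, once it reaches the threshold $v_\ell(\mu)$ at which a pair was forbidden, stays there (``never overshoots'')---but $\min_{j}v_j(\mu)$ is \emph{not} monotone along the run: removing the bottom student from the right college can push that college's value below the previous minimum. The paper avoids this entirely by arguing pointwise: in \emph{any} matching $\mu'$ with $u_i(-\alpha(s_i))<\min_{j=1,2}v_j(\mu')$, toggling $s_i$ to $-\alpha(s_i)$ makes $s_i$ the leftmost of the three changed coordinates, an unambiguous leximin decrease whose only remedy is returning $s_i$ to $\alpha(s_i)$. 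Relatedly, your proposed invariant that $\mu_P$ ``agrees with a leximin-optimal matching on all locked-in coordinates'' is stronger than anything the paper proves (and likely false as stated); all that is needed, and all the paper uses, is that no matching containing a forbidden pair can leximin-dominate $\mu_P$.
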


\begin{proof}
Let $(s_i,c_j)\in \mathcal{F}$ and let $\mu$ be the matching when $(s_i,c_j)$ are {\em first} added to $\mathcal{F}$. Three possible reasons are:  

\begin{enumerate}
    \item[(i)] $u_i(-\alpha (s_i))<\min_{j=1,2} v_j(\mu)$.
    
    This is only possible (for the first time when $(s_i,c_j)$ are added to $\mathcal{F})$ if $\mu(s_i)=\alpha(s_i)$ and $c_j=-\alpha(s_i)$. Now observe that as matching $s_i$ to $-\alpha (s_i)$, will lead to a decrease in the value of $s_i$, the left most of the three agents whose values are changing, this will cause a leximin decrease. As a result, any matching $\mu'$ where $u_i(-\alpha (s_i))<\min_{j=1,2} v_j(\mu')$, the leximin value will decrease by toggling $s_i$'s matching to $-\alpha (s_i)$. The only way to undo this decrease, is to ensure that $s_i$ is matched to $\alpha (s_i)$. 
    
    \item[(ii)] $s_i$ is being moved from $\alpha (s_i)$ to $-\alpha (s_i)$. 
    
    Here, $c_j=\alpha (s_i)$, and as $s_i$ is being moved to $-\alpha (s_i)$. This implies that any other stable matching where $s_i$ is matched to $\alpha (s_i)$ has either been explored or involves moving students from $-\alpha(s_i)$ (which is the leftmost college in the leximin tuple) to $\alpha(s_i)$. As a result, none of these matchings will be leximin superior to $\mu_P$ which stores the matching with the best leximin value out of the matchings explored. Consequently, moving $s_i$ back to $\alpha (s_i)$, even later in the execution of the algorithm would either cause a stability violation, or undo multiple steps in the execution of the algorithm, none of which will lead to a higher leximin value than $\mu_P$.
    
    \item[(iii)] Another student $s_{i'}$ is being moved from $\alpha (s_{i'})$ to $-\alpha (s_{i'})$ and $v_{\alpha(s_{i'})}(s_i)<v_{\alpha (s_{i'})}(s_{i'})$.
    
    Firstly, observe that as this is the first time $(s_i,c_j)$ is being added to $\mathcal{F}$, $s_i$ is currently matched to $\alpha(s_i)$, and thus if $\alpha(s_i)$ were the rightmost college in the leximin tuple of $\mu$, then it would be preferred over $s_{i'}$ by $\alpha(s_i)$ and wouldn't be added to $\mathcal{F}$. Consequently, $\alpha (s_i)$ is the leftmost college and $c_j=-\alpha(s_i)=\alpha (s_{i'})$. Now, as a result, any stable matching in which $s_i$ is being matched to $-\alpha (s_i)$, so is $s_{i'}$. Thus, from case (ii), any such matching will not have leximin value greater than $\mu_p$.

\end{enumerate}

As the leximin optimal stable matching will have a leximin value at least as much as $\mu_P$, the matching does not match any forbidden pairs from $\mathcal{F}$.
\end{proof}

Now, it suffices to show that for each student, if they were matched differently from what the algorithm returns, the matching would either be unstable or leximin inferior.

\algconst*

\begin{proof}
Let $I$ be an SMO instance. Fix an arbitrary student $s_i$, for some $i\in [n]$ and let $\mu^*$ be the matching returned by FaSt-Const (Algorithm \ref{alg:m2}) and let $\mathcal{F}$ be as at the end of the execution of FaSt-Const on $I$. We consider three possible cases for the matching of $s_i$ and show that in each matching $s_i$ differently would have led to either an unstable matching or one that is leximin inferior.\\

\noindent \textbf{Case 1:} $\mu^*(s_i)=c_j$ and $(s_i,c_{-j})\in \mathcal{F}$\\

\noindent From Lemma \ref{lem:forbid}, we have that any matching that matches $s_i$ to $c_{-j}$ is either unstable or leximin inferior to $\mu^*$.\\

\noindent \textbf{Case 2:} $\mu^*(s_i)=\alpha(s_i)$ and $\{(s_i,\alpha (s_i)),(s_i,-\alpha (s_i))\}\cup \mathcal{F}=\emptyset$. \\

\noindent This implies that $s_i$ was never considered for $-\alpha (s_i)$ during the entire execution of the algorithm. This is possible in one of two ways:
\begin{enumerate}
\item $\alpha(s_i)$ is the leftmost college in the leximin tuple of $\mu^*$. Hence, any stable matching $\mu$ where $s_i$ is not matched to would either match a forbidden pair or give $\alpha (s_i)$ even lower value than in $\mu^*$, making $\mu$ leximin inferior to $\mu^*$ OR

\item $\alpha(s_i)$ is the rightmost college in the leximin tuple of $\mu^*$. Let $s_{i'}=Bottom(\mu^*(\alpha(s_i),\alpha(s_i))$. Thus, we have that $(s_{i'},-\alpha (s_i))\in \mathcal{F}$. Now any stable matching $\mu$ which matches $s_i$ to $-\alpha (s_i)$ also matches $s_{i'}$ to $-\alpha (s_i)$, which are forbidden, hence $\mu$ will be leximin inferior to $\mu^*.$
\end{enumerate}

\noindent\textbf{Case 3}: $\mu^*(s_i)=\alpha (s_i)$, $(s_i,-\alpha(s_i))\notin \mathcal{F}$ but $(s_i, \alpha (s_i))\in \mathcal{F}$.\\

\noindent This implies that $s_i$ was considered for $-\alpha(s_i)$ during the execution of FaSt-Const but the matching was lower in leximin value to $\mu^*$. This implies till the end of the execution of FaSt-Const, the leximin value of the matchings considered were not higher than $\mu^*$, and the execution stopped due to some other forbidden pair. This implies that any stable matching where $s_i$ is matched to $-\alpha(s_i)$ which has not been explored by during the execution of the algorithm requires matching a forbidden pair or sending students from the leftmost college to the rightmost. In either case the matching will be lower in leximin value to $\mu^*$. 
This completes the proof
\end{proof}

\textbf{Time Complexity}: Observe that each iteration of the while loop toggles the matching of one student at a time and no student is ever moved twice. Within each step takes at most time $O(n)$, as a result, FaSt-Const takes time at most $O(n^2)$.

We now show that in more general settings, especially in the absence of strict rankings, the problem of finding the leximin optimal stable matching becomes intractable

\section{Intractability without Strict Rankings}\label{sec:hard}
We shall now establish the necessity of strict preferences and rankings by establishing hardness otherwise. We shall set $b_j=n-m+1$ in these reductions. We first look at a strict preferences setting without rankings. We find a reduction from the Subset Sum problem (SSP) to this, making it NP-Hard. In SSP, given a set of $k$ positive integers, $A=\{a_1,\cdots,a_k\}$, and a target integer value $B$, such that $\max_A a_i < B \leq \sum_A a_i$, we must decide whether there exists a subset $S\subseteq A$, such that $\sum_S a_i=B$.
    
    \stricthard*
    
    \begin{proof}
    
        Given an instance of SSP with $A=\{a_1,\cdots,a_k\}$ and $B$ we construct an instance of stable many-to-one matchings (SMO) as follows: Set the number of \nameBs $m=k+1$ and the number of \nameAs $n=2k$. We shall construct one \nameA $\A_i$ for each integer $a_i$. The most preferred \nameB of $\A_i$ will be $\B_i$. We shall define $\B_m$ to be the \nameB whose matching (set of students) will correspond to the subset selected for SSP. The remaining \nameAs ensure that all the $\nameBs$ $\B_j$ which are not matched to their corresponding $\A_j$ are not unmatched. These students will be $s_{k+1},\cdots, s_{2k}$. For each $j\in  [k]$, the most preferred college of $s_{j+k}$ will be $c_j$. For stability, the most preferred \nameA of $\B_j$, for $j\in [k]$ is $\A_{j+k}$, the second most preferred is $\A_j$. Consequently, with $\epsilon = \frac{1}{3k^2} < \frac{1}{n}$, the valuations are defined as follows:
        
        For \nameA $\A_i$, $i\in [k]$  if $i=j$, set $u_i(\B_j)=B$. If $j=m$, set $u_i(\B_j)=B-a_i + \epsilon $, else set $u_i(\B_j)=j\epsilon$. For \nameA $\A_i$, $i\in [2k]\setminus [k]$, if $i=j+k$, set $u_i(\B_j)=B$, else, $u_i(\B_j)=j\epsilon$. 
        For \nameB $\B_j$, with $j\in [k]$, if $i=j+k$, set $v_j(\A_i)=2B$, if $i=j$, define $v_j(\A_i)=B$ else, $v_j(\A_i)=i\epsilon$. Finally, for the \nameB corresponding to the subset selected, if $i\in [k]$, define $v_m(\A_i)=a_i$, else, $v_m(\A_i)=i\epsilon$.
    
        Observe that $j\epsilon<1<B-a_i+\epsilon<B$ for all $i\in [k],\,j\in [n]$ and hence, all preferences are strict. As a result, for each $s_i$ for $i\in [k]$, the most preferred college is $c_i$ and the second most preferred college is $c_m$. Further, in any complete \textit{stable} matching $\mu$, we have that for $i\in [2k] \setminus [k]$,  $\mu(\A_i)=\B_{i-k}$. If this doesn't hold for some $i\in [2k]\setminus [k]$,  then $\A_i - \B_{i-k}$ is a blocking pair. 
        
        For $i\in [k]$, for any complete stable matching $\mu$, if  $\mu(\A_i) \notin \{\B_i,\B_m\}$, $\mu$ is leximin dominated by the matchings which are identical to $\mu$ except $\A_i$ is matched to one of $\B_i$ or $\B_m$. Thus, we shall henceforth only consider stable matchings $\mu$ where for $i\in [k]$, $\mu(\A_i) \in \{\B_i,\B_m\}$ and for $i\in [2k] \setminus [k]$,  $\mu(\A_i)=\B_{i-k}$.
        Let $\mu^{i+}$ denote the matching that is identical to $\mu$ except that $\A_i$ is matched to $\B_i$, i.e. if $i'\neq i$, $\mu^{i+}(\A_{i'})=\mu(\A_{i'})$, and $\mu^{i+}(\A_i)=\B_i$. Similarly define $\mu^{i-}$ s.t. if $i'\neq i$, $\mu^{i-}(\A_{i'})=\mu(\A_{i'})$, and $\mu^{i-}(\A_i)=\B_m$. Observe that if $\mu$ is stable, so are $\mu^{i+}$ and $\mu^{i-}$. 
        
        We shall now show that the leximin optimal stable matching will try to ensure that $\B_m$ gets value $B$. To this end, we consider arbitrary, complete stable matchings and show when would moving one \nameA, say $\A_i$ either to $\B_m$ or $\B_i$ improve the leximin value. To show that one matching leximin dominates another, here, we need only consider the agents whose values change: $\A_i$, $\B_i$ and $\B_m$, and show that the minimum of these agents' values is more in one matching than the minimum of these agents' values in the other.
        
        Now fix a complete stable matching $\mu$. If $\B_m$'s value for $\mu$ is greater than $B$, that is $v_m(\mu)> B$, then for any student $\A_i\in \mu(\B_m)$, they lie to the left of $\B_m$ in the leximin tuple with value $u_i(\mu) = B - a_i + \epsilon < B < v_m(\mu)$. Now, it is easy to see that $v_m(\mu^{i+})=v_m(\mu)-a_i\geq B+1-a_i>u_i(\mu)$. The valuations of all other agents remain unchanged or increase. As a result, if $v_m(\mu)> B$, then for any $\A_i\in \mu(\B_m)$, $\mathcal{L}_{\mu^{i+}}>\mathcal{L}_{\mu}$.
        
        On the other hand if $v_m(\mu)\leq B$, then for any $\A_i\in \mu(\B_m)$, $u_i(\mu)=B-a_i + \epsilon>B-a_i\geq v_m(\mu) -a_i=v_m(\mu^{i+})$. Thus, $\mu$ is leximin superior to $\mu^{i+}$ for any $s_i\in \mu(B_m)$. For any $\A_i\notin \mu(\B_m)$ such that $a_i\leq B-v_m(\mu)$, we find that $u_i(\mu^{i-})=B-a_i+\epsilon>B-a_i\geq v_m(\mu)$. As a result, $\mathcal{L}_{\mu^{i-}}>\mathcal{L}_{\mu}$ for any $\A_i\notin \mu(\B_m)$ such that $a_i\leq B-v_m(\mu)$. Now for $\A_i$ such that $a_i> B-v_m(\mu)$, $u_i(\mu^{i-})=B-a_i+\epsilon \leq B - (B-v_m (\mu) +1) +\epsilon < v_m(\mu)$. Thus, $\mathcal{L}_{\mu^{i-}}<\mathcal{L}_{\mu}$.
        
        Consequently, the leximin optimal stable matching $\mu^* $ will always maximize $v_m(\mu)$ with the constraint that $v_m(\mu^*)\leq B$ and will match $\B_m$ to the set of students corresponding to the subset $S^* =\argmax_{S\subseteq A, \sum_S a_i\leq B} \sum_S a_i$. Thus, the required subset exists if and only if $v_m(\mu^*)=B$.    
    \end{proof}
    
As a result, for unrestricted $m$, with strict preferences, without rankings the problem of finding a leximin optimal remains intractable. In fact, the preferences of the agents in the instances created in the above proof are actually very similar with a few changes only.  We now find that under isometric valuations and weak rankings (strict preferences need not be satisfied) , finding the leximin optimal stable matching is NP-Hard, even with a constant number of colleges.

    


    \isohard*
    
   \begin{proof} 
 
For the $m=2$ setting, we give a reduction from the balanced partition problem, which is known to be NP-Complete.\\

\noindent\textbf{Balanced Partition Problem:} Given a set of integers $P = \{p_1, \cdots,$  $ p_k\}$, such that $\sum _{i=1}^k p_i=2B$,  find a 2-partition $\{A_1, A_2\}$ of $P$ which satisfies $\sum _{p_i\in A_j} p_i=B$ for all $j=1,\,2$. \\

\noindent  Given $P$, we shall construct an instance of isometric valuations $I$ with $m=2$, such that $P$ admits a balanced partition if and only if the leximin optimal stable matching of the instance $I$ allocates valuation of $B$ to both $\B_1$ and $\B_2$. Set $n=k$,  $V_{ij}=p_i$ for all $i\in [n], j=1,\,2$. In such a setting it is easy to see that all complete matchings are stable as no \nameA has any incentive to deviate.
In any matching, \nameAs will always get the same value, that is \nameA $\A_i$ always gets value $p_i$. Thus for a leximin optimal matching, it suffices to check the values that the \nameBs attain.  

Let $P$ admit a balanced partition $\{A_1,A_2\}$. Let $\mu$ be the matching which matches $\B_j$ to all the \nameAs whose values are in $A_j$ for all $j=1,\, 2$. Here, clearly, all \nameBs get value $B$. The leximin tuple of $\mu$ will list the $p_i$ values first, in non-decreasing order and the last two entries will all be $B$. Any matching that gives any one of the colleges, say $\B_1$, higher valuation, will naturally decrease the valuation of the another college. Hence $\B_2$'s value will either be lower in the same position in the leximin tuple, or be to the left, resulting in a lower leximin value in both cases. Hence, $\mu$ is a leximin optimal matching.
    
Conversely, if the leximin optimal matching gives value $B$ to all \nameBs then the partition created by the matching is clearly the required balanced partition.

For the $n=3m$ case we give a reduction from the 3-Partition problem which is known to be strongly NP-Hard.\\ 

\noindent\textbf{3-Partition Problem:} Given a set of integers $P = \{p_1,\ldots,\, p_k\}$, such that $\sum _{i=1}^k p_i=\sfrac{3t}{k}$,  find a $\sfrac{k}{3}$-partition $\{A_1, \cdots, A_{\sfrac{k}{3}}\}$ of $P$ which satisfies $\sum _{p_i\in A_j} p_i=t$ for all $i\in [\sfrac{k}{3}]$. \\

Given an instance of the 3-partition problem $P$, we create a matchings instance as follows. Set $m=\sfrac{k}{3}$, $n=k$ and set $V_{ij}=p_i$ for all $j\in [m]$. 
Observe that this is identical to the previous construction, with the exception that there are now more colleges. Thus, from the same reasoning, for a leximin optimal matching, it suffices to check the values that the \nameBs attain.
    
Let $P$ admit a 3-partition $\{A_1,A_2,\cdots, A_{\sfrac{n}{3}}\}$. Let $\mu$ be the matching which matches $\B_j$ to all the \nameAs whose values are in $A_j$ for all $j \in [m]$. Here, clearly, all \nameBs get value $t$. The leximin tuple of $\mu$ will list the $p_i$ values first, in non-decreasing order and the last $m$ entries will all be $t$. Any matching that gives any one of the colleges, say $\B_j$, higher valuation, will naturally decrease the valuation of the another college, say $\B_{j'}$. Hence $\B_{j'}$'s value will either be lower in the same position in the leximin tuple, or be to the left, resulting in a lower leximin value in both cases. Hence, $\mu$ is a leximin optimal matching.
    
Conversely, if the leximin optimal matching gives value $t$ to all \nameBs then the partition created by the matching is clearly the required 3-partition.
\end{proof}

\subsection*{Hardness of Approximation}
We now discuss the possibility of finding some approximation to the leximin optimal stable matching. 
We shall show that under weak rankings, unless P=NP, no polynomial factor approximation is possible. We first define what an approximation algorithm for finding a leximin optimal (over any space) must guarantee.

\begin{definition}[$\alpha$-approximation of leximin] An algorithm is to give an $\alpha$-approximation $\alpha \in (0,1)$ to the leximin optimal, if given instance $I$ with $\mu^*$ as the leximin optimal solution, it outputs $\mu$ s.t. for each index $t$ of the leximin tuple, $\alpha \mathcal{L}_{\mu^*}[t] \leq \mathcal{L}_{\mu}[t] \leq \frac{1}{\alpha} \mathcal{L}_{\mu^*}[t]$ 

\end{definition}

\noindent This is a generalization of the definition of approximation algorithms which optimize for a single value. Note that it important that the approximation factor holds for every index. We would not like a setting where the first index of the leximin tuple satisfies the condition but the subsequent indices do not. This is because a good approximation to the first index may be satisfied by all stable matchings and would not guarantee any fairness to any of the other agents. 
We now give a reduction from the Bin Packing problem to establish the hardness of approximation for leximin optimality. We look at the decision version of the problem where given a set of items $G$, each with weight $w_i$ and $k$ bins of capacity 1, we must decide if there is a partition of the items into $k$ bins such that the sum of the weights of the items in each bin is at most 1. We assume, without loss of generality, that $w_i\leq 1$ and $|G|\geq k >1$.

Before proving Theorem \ref{thm:apxhard}, we will develop some of the necessary machinery for its proof in the following lemma.

\begin{lemma}\label{lem:halfapx}
Unless P=NP, no $1/2+\epsilon$ approximation algorithm exists for finding the leximin optimal stable matching
\end{lemma}

\begin{proof}
Given a bin packing instance $\langle G,k, \{w_i\}_{g_i\in G}\rangle$, where $0 \le w_i\leq 1$ for all items $g_i\in G$, $\sum_{g_i\in G} w_i\leq k$, and  $|G|>k$, we construct an instance of stable many-to-one matching (SMO) as follows:  Set $n=|G|+1$, we create a \nameA for each item and an additional dummy student. Further, set $m=k+1$, we create a \nameB for each bin and an additional dummy \nameB which all the \nameAs will prefer to the bins. 
We shall interchangeably refer to these non-dummy \nameAs and \nameBs as items and bins. 

The valuations are then set so that the following properties hold:

\begin{enumerate}
    \item[P1.] All the \nameAs $s_1, \cdots, s_{n-1}$ prefer the dummy \nameB $c_n$ over other colleges,
    \item[P2.] All the non-dummy \nameBs (bins) $c_1,\cdots, c_{m-1}$ have no value for the dummy student $s_n$,
    \item[P3.] The dummy \nameA $s_n$ and dummy \nameB $c_m$ are always matched in a leximin optimal stable matching, and
    \item[P4.] At least one of the non-dummy \nameAs (items) $s_1, \cdots, s_{n-1}$ will be matched to the dummy \nameB $c_m$ if and only if there is no bin packing.
\end{enumerate}  

We set the valuations as follows: 
Each non-dummy \nameA or item's valuation towards the non-dummy \nameBs or bins are equal and given by $u_i(c_j)=1-w_i+\epsilon, \mbox{ for each } i\in [n-1] \mbox{ and each } j\in [m-1]$, where, $\epsilon>0$ is an infinitesimally small value. 
Moreover, to ensure that the \nameAs prefer the dummy \nameB $\B_{m}$ over the bins, we set $u_i(\B_m)=2$ for each $i\in [n-1]$. For the dummy student $s_n$, we set $u_n(\B_j)=0$ for all $j\in[m-1]$ and $u_n(\B_m)=1$. 

The bins' valuations for each item $i$ is the weight $w_i$, i.e, $v_j(\A_i)=w_i$ for each $j\in [m-1]$ and each $i\in [n-1] $. Also, for any bin, the value for the dummy \nameA is $v_j(\A_n)=0$. 
For the dummy \nameB $\B_m$, we set $v_m(s_i)=n$ for all $i\in [n]$. If $\A_n$ is not assigned to $\B_n$, the valuation obtained by $\A_n$ would be $0$. Hence, the lexicographically ordered valuations obtained by assigning $\A_n$ to $\B_m$ would always dominate all such matchings where all other agents are the same but $\A_n$ is assigned to a bin. Hence, a leximin optimal stable matching will always match $\A_n$ to $\B_m$.

We now show that, in the leximin optimal stable matching, each item is assigned to some bin if and only if a bin packing exists. 
Suppose the bin packing instance $\langle G,k, \{w_i\}_{g_i\in G}\rangle$ admits a bin packing. We claim that the leximin optimal stable matching, matches all the items to bins. Let $\mu$ be the leximin optimal over the space of all stable matchings  where the items are matched to bins only. Note that $\mu$ is also the solution obtained by taking leximin optimal of bin valuations over all bin packing solutions. Thus, for each bin $\B_j$, we have $v_j(\mu)\leq 1$. 

Now, fix an item $s_i$, $i\in [n-1]$. As $s_i$ is matched to a bin, say $c_j$, under $\mu$, $u_i(\mu)=1-w_i+\epsilon$. Now let $\mu'$ be identical to $\mu$, with the exception that $s_i$ is matched to $\B_m$. As a result, the value of $c_j$ for $\mu'$ is $v_j(\mu')=v_j(\mu)-w_i\leq 1-w_i<u_i(\mu)$ and valuation of $s_i$ would increase to $2$, and $v_m(\mu')=2n$. The values of all other agents are identical in $\mu$ and $\mu'$. Thus $\mu'$ is leximin inferior to $\mu$. To overcome this, we must take an item $\A_{i'}$ from another bin, which decreases its valuation, in turn to less than $1-w_{i'}+\epsilon$, thus this matching is also leximin inferior to $\mu$. Hence, every time we attempt to cover for the decrease in the leximin value, will create a new bin whose valuation is less than the valuation of the item removed from it. This will always be leximin inferior to $\mu$. Consequently, if a bin packing exists, the leximin optimal matching must match the items to the bins only, and hence, $\B_m$'s valuation will be $n$.

Now suppose that no bin packing solution exists, any matching that matches all the items to bins only, will give at least one bin valuation strictly greater than 1. Consider any item matched to this bin. Matching it to $\B_m$ will reduce the bin's valuation but it will still remain greater than $1-w_i +\epsilon$, $\A_i$'s current valuation ($\epsilon$ is infinitesimally small). Along with this, the valuation of $\A_i$ and $\B_m$ will both increase and the remaining agents' will not be affected. As a result, whenever a bin packing does not exist, at least one item will be matched to $\B_m$, and thus $\B_m$'s valuation will be at least $2n$. This shows that the valuations ensure that properties P1-P4 are true. 

Suppose, for some $\alpha>1/2$, an $\alpha$-approximation algorithm, say $ALG$, exists for finding the leximin optimal stable matching. Now, let $\mu'$ be the matching that $ALG$ outputs on the matching instance $I$ corresponding to the bin packing problem and let $\mu$ be the leximin optimal stable matching. Recall that by construction of $I$, $\B_m$'s valuation always appears highest (last) in the leximin tuple.  Therefore, whenever a bin packing exists $v_m(\mu)=n$ and thus, $\alpha n<v_{m}(\mu')<\frac{n}{\alpha}$. Since $\alpha>1/2$, $\frac{n}{2}<v_{m}(\mu')<2n$. Moreover, if a bin packing does not exist, $v_m(\mu)\geq 2n$, and thus, $v_m(\mu')>n$. Further, by construction, $v_m(\cdot)$ can only have values which are integral multiples of $n$, thus $v_m(\mu')\geq 2n$ whenever a bin packing does not exist.

Hence, unless P=NP, no $\frac{1}{2}+\epsilon$-approximation algorithm exists, for any $\epsilon>0$.
\end{proof}

\noindent We can extend this technique by doing a much more intricate reduction to establish the following theorem.    
    
\polyapxhard*

    \begin{proof}
This proof builds upon the proof of Lemma \ref{lem:halfapx}. In order to get a better bound for the hardness of approximation, we must increase the gap in $\B_m$'s valuations while ensuring that $\B_m$'s valuation always appears at the end of any complete\footnote{Recall that we say a matching is complete if each agent's matching is non-empty} stable matching. We do this by replicating the bin packing instance. 

Thus, for any $\delta>0$, $c\in \mathbb{Z}^+$, we can set $t=\lceil cn^{\delta }\rceil$. Now, given a bin packing instance $\langle G,k, \{w_i|g_i\in G\}\rangle$, we create an SMO instance as follows. Let $|G|=l$.

Set $n=tl+1$ and $m=tk+1$ and hence, replicate the bin packing instance $t$ times. That is, for each item we create $t$ copies of it and now there are $tk$ bins. We shall keep these copies from interacting with each other, that is, items from one copy will only be matched to bins from the same copy. The valuations will continue to satisfy the properties P1-P4 listed in the proof of the previous lemma. The valuation functions are accordingly defined as follows:

For the $p^{\text{th}}$ copy of bin $j$, $c_{(p-1)k+j}$, it has value for items from the same or any earlier copy, hence we set $v_{(p-1)k+j}(\A_{(p'-1)l+i})=w_i$ for all $j\in[k],\,p,\,p'\,\in [t],\, i\in [l] $  s.t. $ p\geq p'$ and $v_{(p-1)k+j}(\A_{(p'-1)l+i})=0$ for all $j\in[k],\,p,\,p'\,\in[t],\, i\in[l] $  s.t. $ p< p'$. None of the bins have value for the dummy student $s_n$, so we set $v_j(s_n)=0$ for all $j\in [m-1]$. For the dummy college $c_m$, all students give equal value and we set $v_m(s_i)=(t+1)n$ for all $i\in [n]$.

For the $p^{\text{th}}$ copy of item $i$, $s_{(p-1)l_i}$, it has no value for bins from later copies and equal value for all bins from the same or earlier copies. To this end, we set $u_{(p-1)l+i} (\B_{(p'-1)k+j})=1-w_i+\epsilon $ for all $ i\in [l], p, p' \in [t], j \in [k] $ s.t. $ p \geq p'$ and $u_{(p-1)l+i} (\B_{(p'-1)k+j})=0 $ for all $ i\in [l], p, p' \in [t], j \in [k] $ s.t. $ p < p'$. All items have the highest value for the dummy college $c_m$, so we set $u_i(c_m)=t+1$ for all $i\in [n-1]$. The dummy student $s_n$ has value only for the dummy college $c_m$, consequently, we define $u_n(c_j)=0$ for all $j\in [m-1]$ and $u_n(c_m)=1$. 

These are summarised as follows.

{
\begin{align}
    v_{(p-1)k+j}(\A_{(p'-1)l+i})&=w_i &&\text{for all }j\in[k],\,p,\,p'\,\in[t],\, i\in[l] \text{ s.t. } p\geq p' \label{eq:binutil}\\
    v_{(p-1)k+j}(\A_{(p'-1)l+i})&=0 &&\text{for all }j\in[k],\,p,p'\in[t],\, i\in[l]\text{ s.t. } p<p' \label{eq:cross1} \\
    v_{j}(\A_n)&=0 &&\text{for all }j\in [m-1] \nonumber  \\
    v_m(\A_i)&=(t+1)n &&\text{for all }i\in [n] \nonumber \\
    u_{(p-1)l+i} (\B_{(p'-1)k+j})&=1-w_i+\epsilon && \text{for all } i\in [l], p, p' \in [t], j \in [k] \text{ s.t. } p \geq p' \label{eq:itemuputil} \\
    u_{(p-1)l+i} (\B_{(p'-1)k+j})&=0 && \text{for all }i\in [l],p,p'\in [t],j\in [k]\text{ s.t. } p < p' \label{eq:itemlowutil}\\
    u_{i}(\B_m)&=t+1 &&\text{for all }i\in [n-1] \nonumber \\
    u_n(\B_j)&=0 && \text{for all }j\in [m-1] \nonumber \\
    u_n(\B_m)&=1 \nonumber
\end{align}
}
\noindent The construction is similar to the one in the proof of Lemma \ref{lem:halfapx}. Here, we make $t$ copies of each item and bin so that we get $t$ copies of the bin packing instance in the SMO instance, with for each $p\in[t]$,  $\A_{(p-1)l+1},\cdots,\A_{(p-1)l+l}$ representing $g_1,\cdots,g_n$ respectively and $\B_{(p-1)k+1},\cdots, \B_{(p-1)k+k}$ representing the $k$ bins of capacity 1. For each $p\in [t]$, we consider $\A_{(p-1)l+1},\cdots,\A_{(p-1)l+l}$ and $\B_{(p-1)k+1},\cdots, \B_{(p-1)k+k}$ to represent the $p^{th}$ copy. 

The valuations are defined such that all agents in a copy $p$ have value 0 for all agents in copy $p'$ for all $p'>p$ (from Equations \ref{eq:cross1} and \ref{eq:itemlowutil}). Consequently, no leximin optimal matching will match a \nameA in copy $p$ to a \nameB in copy $p'$ when $p'\neq p$, as this would mean this \nameA gets value 0. Further, from Equation \ref{eq:itemuputil}, for all $p\in[t]$, each \nameA  in copy $p$, is indifferent between all the \nameBs in copies $1,\cdots,p$. However, the \nameBs in copies $1,\cdots,p-1$ have valuation $0$ for \nameAs in copy $p$ (from Equation \ref{eq:cross1}), ensuring that all items are matched to bins in their own copy or $\B_m$.

Thus, from an analogous argument to that in the proof of the previous lemma, if a bin packing exists, the items are matched to bins (of the same copy) in the leximin optimal stable matching. Thus, it gives $\B_m$ is matched only to $\A_n$ and thus has a valuation of exactly $(t+1)n$. If a bin packing doesn't exist,  at least one item from each copy is matched to $\B_m$ giving it a valuation of at least $(t+1)^2 n$.

We shall now show that a $1/cn^{\delta}$-approximation algorithm with $\delta>0$ and $c\in \mathbb{Z}^+$ to find a leximin optimal stable matching with weak rankings will match $c_m$ to at least one item if and only if a bin packing doesn't exist. Recall that we set $t=\lceil cn^{\delta }\rceil$. Now let a $1/cn^{\delta }$-approximation algorithm exist, call it $ALG$. 

Further, let $\mu^*$ denote a leximin optimal stable matching of the instance constructed. Then clearly $u_i(\mu^*)>0$ and $v_j(\mu^*)>0$ for all $i\in [n]$ and $j\in [m]$. Further note that by construction, $\B_m$'s valuation will always appear in the last index of the leximin tuple of any matching where $\B_m$ is matched with at least one \nameA. As $1/cn^{\delta}>0$, $v_m(\mu_{ALG})\geq (t+1)n$ and it will be the last index in the leximin tuple of $\mu_{ALG}$. Similarly for the dummy student $s_n$, $u_n(\mu_{ALG})$>0. Consequently, $\mu_{ALG}(s_n)=c_m$. Further, there do not exist $p,p'\in [t]$, $i\in [l]$ and $j\in[k]$ s.t. $p\neq p'$ and $\mu_{ALG}(\A_{(p-1)l+i})=\B_{(p'-1)k+j}$.

Now let a bin packing exist. Hence, for all students, that is for all $i\in [n]$,  $0<u_i(\mu^*) \leq 1$ and for all bins $j\in[m-1]$ $0<v_j(\mu^*)\leq 1$. Thus, for all $d\in[m+n-1]$, $\mathcal{L}_{\mu^*}[d]\leq 1$. Further, $v_m(\mu^*)=(t+1)n$. 

Now as $\B_m$'s valuation will be the greatest in $\mu_{ALG}$, for any \nameA $\A_i,\,i\in [n-1]$, $u_i(\mu_{ALG})$ must appear in the first $m+n-1$ entries of $\mathcal{L}_{\mu_{ALG}}$, say $d_i$.  For $i\in [n-1]$, let $d_i$ be index where $s_i$'s value lies in the leximin tuple for $\mu_{ALG}$. Thus, we have that  $u_i(\mu_{ALG}) =\mathcal{L}_{\mu_{ALG}}[d_i]\leq cn^{\delta}\mathcal{L}_{\mu^*}[d_i]\leq cn^{\delta}<t+1$

Thus, whenever there is a bin packing, no $\A_i$, for $i\in [n-1]$, is matched to $\B_m$. Consequently, the dummy college gets value $v_m(\mu_{ALG})=(t+1)n$.

If a bin packing does not exist, we have that $c_m$ is matched to at least one item from each of the $t$ copies under $\mu^*$ and $v_m(\mu^*)\geq t(t+1) n$. Further, as $\B_m$ is the last index of the leximin tuple in both $\mu^*$ and $\mu_{ALG}$, we have that, $v_m(\mu_{ALG})\geq (1/cn^{\delta})t(t+1) n$. Now as $t=\lceil cn^{\delta}\rceil$, we have that \[v_m(\mu_{ALG})\geq \frac{t(t+1) n}{t}>(t+1)n.\] 
Thus, a bin packing exists if and only if $v_m(\mu_{ALG})=(t+1)n$.
\end{proof}

Note that the above reduction constructs a SMO instance with weak rankings. Thus, even when the instances have weak rankings, the hardness of approximation remains. This clearly subsumes the case when there are no consistent rankings. This concludes all the results of this paper.

\section{Incentive Compatibility} \label{sec:ic}

We now explore the existence of incentive compatible mechanisms which return the leximin optimal over the space of stable matchings. Given our results on the tractability of this problem, we restrict our focus to instances with raked valuations. Observe that this an impossibility here also implies an impossibility when the valuations are unrestricted.

Narang and Narahari\cite{narang2020study} study the existence of incentive compatible mechanisms which find stable fractional matchings when $m=n$. In particular, they study a class of matchings instances where incentive compatibility is achievable. Interestingly the space of instances with rankings is subsumed by this class for one-to-one matchings. We show that for any mechanism that computes the leximin optimal stable matching, agents have an incentive to be honest if and only if $n=m$, in which case, we are essentially in the setting studied by Narang and Narahari\cite{narang2020study}.

\begin{restatable}{lemma}{icimp}\label{lem:ic}
A mechanism which takes as input an SMO instance $I$ with general ranked valuations and outputs the leximin optimal stable matching does not give any agent an incentive to misreport their valuations if and only if $n=m$.
\end{restatable}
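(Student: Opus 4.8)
The plan is to prove the two directions separately, both resting on the contiguous-block characterisation of stable matchings in Lemma~\ref{lem:struct} together with the fact (Observation~\ref{obs:non-empty}) that a leximin optimal stable matching is complete. The crucial idea is that, under ranked valuations, the \emph{set} of complete stable matchings depends only on the common ordinal ranking, which is fixed across the whole domain, whereas the leximin-optimal \emph{selection} from this set depends on the reported cardinal values. A misreport within the ranked domain preserves the common ordering and may only alter cardinal values; hence manipulation can occur precisely when there is more than one complete stable matching to choose from. This reduces the whole lemma to counting complete stable matchings.

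For the ``if'' direction, suppose $n=m$. By Lemma~\ref{lem:struct} a complete stable matching corresponds to a composition of $n$ into $m$ positive parts, and when $n=m$ the only such composition is $(1,\dots,1)$; thus the unique complete stable matching assigns $s_i\mapsto c_i$ for every $i$. Since the leximin optimum is complete, the mechanism outputs this same matching on every ranked instance, independently of the reported cardinal values. Consequently the outcome is invariant under any (ranked) misreport, so no agent can change, let alone improve, its assignment: the mechanism is strategyproof. This is exactly the one-to-one regime studied by Narang and Narahari.

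For the ``only if'' direction I would argue the contrapositive: if $n>m$ (we take $m\ge 2$; the case $m=1$ forces all students onto the single college and is degenerate) then the mechanism is manipulable, and it suffices to exhibit one ranked instance with a profitable misreport. Take the core case $m=2$, $n=3$, where the only complete stable matchings are $\mu_A$ (with $c_1\mapsto\{s_1,s_2\}$, $c_2\mapsto\{s_3\}$) and $\mu_B$ (with $c_1\mapsto\{s_1\}$, $c_2\mapsto\{s_2,s_3\}$). I would pick strict ranked valuations so that, reported truthfully, $c_2$ is the leximin bottleneck: in $\mu_A$ its value (a single student) is smaller than every entry appearing in $\mu_B$'s leximin tuple, so $\mu_B$ is the leximin optimum and $s_2$ is sent to its less preferred college $c_2$. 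Student $s_2$ then reports a much smaller value for $c_2$, still below its value for $c_1$ so that the common ranking is preserved; this pushes the minimum entry of $\mu_B$'s tuple below $c_2$'s value in $\mu_A$, the mechanism now selects $\mu_A$, and $s_2$ is matched to its preferred college $c_1$, strictly raising its true utility. A concrete choice such as $v_2(s_3)=1$, $v_2(s_2)=5$, $v_1$ large, and $u_2(c_1)=7>u_2(c_2)=6$ with misreport $u_2(c_2)=\tfrac12$ makes this explicit and checkable. For general $n>m$ I would pad this core: give colleges $c_3,\dots,c_m$ valuations large enough that in every leximin optimum each receives a single dedicated student and is never the bottleneck, and place any surplus students in $c_1$'s block with high values so that they are inert; the manipulation then stays localised to the top two colleges exactly as above.

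The main obstacle is this ``only if'' direction, and within it the passage from the two-college core to arbitrary $n>m$: one must verify that the padding valuations genuinely force $c_3,\dots,c_m$ to take exactly one student each in the leximin optimum, so that the block structure collapses to the core, and that neither the truthful optimum nor the post-misreport optimum escapes into a matching with a different block profile. This is a routine but careful case check over the compositions permitted by Lemma~\ref{lem:struct}; by contrast the ``if'' direction and the core manipulation itself are comparatively immediate.
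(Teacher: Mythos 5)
Your proof is correct for the lemma as literally stated, but it takes a genuinely different route from the paper's. The paper proves a stronger, \emph{universal} claim: for \emph{every} ranked instance with $n>m$ it exhibits a manipulating agent. It does this by defining $Best(\A_i)$ (the highest-ranked college $\A_i$ can get in a complete stable matching) and $Worst(\B_j)$, fixing an arbitrary $p$ with $1<p\leq m<n$, and splitting on whether $\mu^*(\A_p)=Best(\A_p)$: in one case the college $Best^{-1}(\A_p)$ manipulates by scaling down its reported values for high-ranked students so that it becomes the leximin bottleneck and absorbs more students; in the other case the student $\A_p$ manipulates by deflating all its reported values below the threshold $\delta=\min_{j>1}v_j(Worst(\B_j))$, forcing the leximin optimum to assign it $Best(\A_p)$. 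You instead read strategyproofness as the (standard) universally quantified property, whose negation is existential, and refute it by one explicit instance per $(n,m)$ --- the $m=2$, $n=3$ core with the deflation misreport $u_2(\B_2)=\tfrac12$, which checks out: truthfully $\mu_B$ wins ($\min$ entry $2$ vs.\ $1$), after the misreport $\mu_A$ wins ($1$ vs.\ $\tfrac12$), and $\A_2$'s true utility rises from $6$ to $7$, with the misreport staying inside the ranked domain. Your approach buys concreteness and verifiability (the paper's Case 1 misreport $v'$ is under-justified and, as written, can even violate the common ranking, since $v_t(\A_p)/\alpha$ need not exceed $v_t(\A_{p+1})$ for large $\alpha$); the paper's approach buys the stronger per-instance manipulability statement, which your single-instance method cannot deliver. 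Two caveats: your padding step for general $n>m$ is only sketched and does need the verification you flag (that the blocks of $\B_3,\dots,\B_m$ are pinned to singletons under both the truthful and the misreported profile, using the fact that a common value-multiset shared by two matchings does not affect their leximin comparison); and your parenthetical about $m=1$ is actually a counterexample to the lemma's literal statement (with $m=1<n$ the unique complete stable matching makes the mechanism trivially non-manipulable), an implicit restriction to $m\geq 2$ that the paper's proof also makes by requiring $1<p\leq m$.
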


\begin{proof}

We first begin with the simpler case. Under strict ranked valuations, if $n=m$ then there is a unique stable matching, irrespective of the exact valuations reported. Thus, no agent has an incentive to misreport their preferences.

Now we look at the case when $n>m$ with rankings. We shall show that there always exists an agent who wishes to misreport their valuations.

Given $I=\langle \symbA, \symbB, U, V \rangle$, let $Best:\symbA \rightarrow  \symbB$ be a function such that $Best(\A_i)$ is the highest ranked \nameB,  $\A_i$ can be matched to in a stable matching where each agent's matching is non-empty. Similarly let $Worst$ be a a function such that $Worst(\B_j)$ is the lowest ranked \nameA, $\B_j$ can be matched to in a stable matching where each agent's matching is non-empty.

Therefore $Best(\B_j)=\A_j$ for all $j\in [m]$. For $i\leq n-m+1$, $Best(\A_i)\B_1$ and for $i\geq n-m+1$, $Best(\A_i)=\B_{m-n+i}$. Thus if $i\neq 1$, $Best(s_i)\neq Best^{-1}(s_i)$.

Let $\mu^*$ be the leximin optimal stable matching for $I$. Fix $p$ such that $1<p\leq m<n$ \\

\noindent \textbf{Case 1:} $\mu^*(\A_p)=Best(\A_p)$. 

Let $\B_t=Best^{-1}(\A_p)$. Define $\gamma=\min \{u_i(\B_j)|\,u_i(\B_j)>0\}\cup \{v_j(\A_i)|\,v_j(\A_i)>0\}$

Set $\alpha = \Big\lceil \frac{v_t(\A_p)}{\gamma}\Big\rceil n$. Define $v'$ where \[v'(\A_i)=\begin{cases}
v_t(\A_i) & i>p \\
v_t(\A_i)/\alpha & i\leq p

\end{cases}
\]
Now $\B_t$ can misreport their valuation function as $v'$ and ensure they are matched to $\{\A_p,\cdots, Worst(\B_j)\}$.  This is because our choice of $v'$ ensures that $\B_t$ will be considered as the agent with the lowest valuation. Thus $\B_t$ has an incentive to misreport\\

\noindent \textbf{Case 2:} $\mu^*(\A_p)\neq Best(\A_p)$.

Let $\delta =\min_{j>1} v_j(Worst(\B_j))$. Recall that if $u_p(Best(\A_p))<\delta$ then in the leximin optimal stable matching, $\A_p$ must be matched to $Best(\A_p)$. Thus, $\A_p$ can misreport $u_p$ as $u'$ where $u'(\B_j)=\frac{\delta -\epsilon}{j}$, $\epsilon>0$.

This ensures that $\A_p$ is matched to $Best(\A_p)$ by misreporting. Thus, $\A_p$ has an incentive to misreport.

\end{proof}
\section{Conclusion and Future Work} 


This paper initiates the study of finding a fair and stable many-to-one  matching under cardinal valuations. We studied leximin optimality, which has been hitherto unexplored for stable matchings. We give efficient algorithms to find the leximin optimal stable matching under rankings with strict preferences. In the absence of rankings, but with strict preferences, we give an algorithm to find the leximin optimal stable matching when the number of colleges is fixed to two.  We then showed that when $n=\Omega (m)$, then, under strict preferences without rankings, the problem becomes intractable.
We showed that even with weak rankings, finding a leximin optimal stable matching is intractable. In fact, it is NP-Hard to find even a polynomial approximation with general valuations. 

One open problem is whether for a higher constant number of \nameBs or even a sublinear number of colleges, with strict preferences, an exact algorithm may be possible. Another potential direction may be finding an a more general subclass of matching instances where it is possible to find a fair or approximately-fair and stable matching. While our results have ruled out approximations for additive valuations in general, under isometric valuations or general strict preferences, approximations may still be possible. One line of future work would be to explore this direction. We believe that our work will encourage further research on fair and stable matchings. 

\newpage

\bibliographystyle{abbrv} 
\bibliography{references.bib}
\newpage

\appendix
\section{Other Fairness Notions} \label{sec:envy}
\textbf{Envy and Stability}: Fairness has been widely studied in computational social choice, with various fairness notions considered for both divisible and indivisible goods. Since we are interested in integral matchings (matchings where agents are matched wholly/ integrally, no agent is matched partially/fractionally), we shall only discuss allocations of indivisible goods. 
Often, the first fairness notion that comes to mind when we think of fair allocations is \textit{envy-freeness} (EF). Informally, an EF allocation guarantees that every agent would prefer its own allocation over any other agent's allocation.  
Unfortunately, EF allocations/matchings need not exist in indivisible settings. 
Consider $\symbA=\{\A_1,\A_2\}$ and $\symbB=\{\B_1,\B_2\}$ with isometric valuations $V_{11}=10$, $V_{12}=5$, $V_{21}=8$, and $V_{22}=2$. Let $\mu$ be a matching such that $\mu(\A_1)=\B_1$ and $\mu(\A_2)=\B_2$. Since $u_{2}(\B_1)>u_{2}(\B_2)$, the agent $\A_2$ envies $\A_1$, similarly, agent $\B_2$ envies $\B_1$. It is easy to see that in every possible matching $\mu$, there will always be at least one agent who will envy the other agent.  

Envy-freeness upto one item (EF1) is a popular notion of fairness for allocations of indivisible items. An allocation $A=(A_1,\cdots,A_n)$ is said to be EF1 if for every distinct pair of agents $i$ and $j$, there exists $g\in A_j$ such that $u_i(A_i)\geq u_i(A_j\backslash \{g\})$. An allocation is \textit{envy-free up to any good} (EFX) if for all $g\in A_j$, $u_i(A_i)\geq u_i(A_j\backslash \{g\})$. In one-to-one matchings, EF1 and EFX are achieved trivially. 

In the case of many-to-one matchings, we can find matchings that are EF1 for the \nameBs  by using a round robin procedure. However, such a matching need not be stable, even under ranked isometric valuations. In fact, there exist ranked isometric valuations instances where no matching simultaneously satisfies stability and EF1. Consider the following example: let $n=4$ and $m=2$. The valuation matrix is as in Table \ref{tab:ex}.
{\footnotesize  
\begin{table}[ht]
    \begin{minipage}{0.45\textwidth}
         \centering
         \begin{tabular}{|c|c|c|}
        \hline
        \nameA & $\B_1$ & $\B_2$ \\
        \hline
        $\A_1$ & 100 & 10 \\
        $\A_2$ & 99 & 9 \\
        $\A_3$ & 20 & 4 \\
        $\A_4$ & 19 & 3 \\
        \hline
        
    \end{tabular}
    \caption{Valuation Matrix}
    \label{tab:ex}
    \end{minipage}
    \begin{minipage}{0.45\textwidth}
    \centering
    \begin{tabular}{|c|c|c|c|c|}
    \hline
    $\B_1$ & $\B_2$ & $E_\symbA$ & $E_\symbB$ & $E_{total}$  \\
    \hline
    1-4 & - & 0 & 26 & 26 \\
    1-3 & 4 & 16 & 20 & 36 \\
    1,2 & 3,4 & 32 & 12 & 44\\
    1 & 2-4 & 120 & 38 & 158\\
    - & 1-4 & 0 & 238 & 238 \\
    \hline 
\end{tabular}     
\caption{Stable Matching Space}    
\label{tab:val}
    \end{minipage}
 \   
 \caption*{A counterexample for Envy and Stability}  
    
 \end{table}   
}
\noindent Now it is easy to verify that the only EF1 matching is $\mu=\{(\A_1,\B_1),(\A_2,\B_2),(\A_3,\B_1), (\A_4,\B_2)\}$. This however is not stable as $(\A_2,\B_1)$ form a blocking pair. Hence, when stability is non-negotiable, EF1 cannot be the fairness notion of choice for isometric valuations. Consequently, neither can EFX.

It has been shown that the space of stable matchings of any given instance can be captured as the extreme points of a linear polytope. Thus, we can optimize any linear function over this space. Consequently, our next idea may be to look for a stable matching that minimizes average envy, or equivalently, total envy. But that too can often lead to matchings that are inherently unfair. Consider the example given in Table \ref{tab:ex}. 
The stable matchings in this example and the envy they induce is as in Table \ref{tab:val}.
 
Clearly, in this example, matching all the \nameAs to $\B_1$ reduces the total/average envy but this is obviously unfair to $\B_2$. This is happening despite the fact that there are more \nameAs than there are colleges. Note that in this example, there is in fact a ranking and yet envy doesn't work well. In fact, this example can be extended for much larger values of $n$ so that $\B_2$ is matched to no \nameAs in the stable matching which minimizes total envy.  

\textbf{Welfare based Fairness}: Maximizing for Nash Social Welfare over the space of stable matchings for such examples would again result in matchings where $\B_2$ is matched to exactly 1 \nameA even for very high values of $n$.  These solutions in settings like labour markets or college admissions would result in the rich getting richer, defeating the purpose of fairness. One alternative would be egalitarian welfare, where we aim to simply maximize the valuation of the worst off agent. Clearly, all leximin optimal solutions would also optimize for egalitarian welfare. However, in the case of ranked isometric valuations, as a result of Lemma \ref{lem:struct}, any fair and stable matching must match $\A_n$ to $\B_m$, and $\A_n$ will always have least valuation. As a result, all complete stable matchings under isometric valuations will optimize egalitarian welfare. Thus, in this setting, egalitarian welfare alone is not enough to ensure true fairness. In a similar spirit, a good approximation to egalitarian welfare may be satisfied by matchings which give no guarantee to the remaining agents, and can even be inefficient to a large extent. 
In order to avoid such outcomes, we study leximin optimal fairness.

\section{Capacity Constrained Settings} \label{sec:cap}
In FaSt and FaSt-Gen we had assumed that each \nameB has capacity $b_j=n-1$.  We now give the formal algorithms for when this assumption is relaxed to allow \nameBs to have arbitrary capacity. The Demote procedure (Algorithm \ref{alg:demote}) remains the same as the uncapacitated setting.

\subsection{Ranked Isometric Valuations}
We first translate FaSt for capacity constrained settings. Recall that $b_j\in [n]$ denotes the capacity of $\B_j$. We assume without loss of generality that $b_j\geq 1$. Here, the student optimal stable matching, matches $\B_1$ to as many students as possible, i.e. $\min (n-m+1,b_1)$. If $b_1< m-n+1$, then we match as many students to $\B_2$ as possible i.e. $\min (n-b_1-m+2,b_2)$. This process is now repeated till all the students are matched. Now we can simply follow FaSt as is, taking students from the lowest ranked college with multiple students matched to it,  with the additional constraint that we fix a \nameB when it reaches full capacity.   

In the tie-breaking routine, we must also check if the capacity constraint is violated in the while loop. Now, $\mu$ will still continue to be a valid matching as we only update $\mu$ is we find a leximin increase. This update happens only after checking that $\mu'$ is still a valid matching in the while loop. As a result, $\mu$ is always a valid matching.

Despite the additional capacity constraints, the five observations listed in Section \ref{sec:algs}.1 continue to hold. Thus by analogous reasoning to the uncapacitated setting, we can optimize the leximin tuple, one entry at a time. Thus, CapFaSt correctly finds the leximin optimal stable matching, from a similar argument as Theorem \ref{thm:main}. 

\begin{algorithm}[!ht]
    \small{
     \KwIn{Instance of ranked isometric valuations with capacities $\langle \symbA,\symbB,V, B\rangle$}
     \KwOut{$\mu$}
     Initiate a stable matching: $\mu$ as the \nameA optimal stable matching\;
     Initialize $i\gets n-1$, $down \gets m$ \;
     Initialize $up\gets \max \{j\in [m]| |\mu(\B_j)|>1\}$\;
     Set $\mathcal{L}$ as the leximin tuple for $\mu$\;
     Set $pos[i]$ as the position of $\A_i$ in $\mathcal{L}$, $i\in [n]$\;
     
     Initialize $\mathcal{F}\gets \{\A_n\};\,//$ stores the agents whose matching is fixed.\\
     \While{$i>down-1$ AND $down>1$ }{
     \eIf{$v_{down}(\mu)\geq v_{i(down-1)}$ OR $|\mu(j)|=b_j$}
     {
        $down\gets down-1$;
     }
     {
     
      \eIf{$[v_{i\,down}>v_{down}(\mu)]$}{
        $\mu \gets Demote(\mu,i,up,down)$\;
       }{
      \eIf{$v_{i\,down}<v_{down}(\mu)$}{
        $down\gets down-1$\;
      }{
        // Tie-breaking routine: Need to check if sending $\A_i$ to $\B_j$ will improve the leximin value \;
        $k\gets i-1$\;
        $t\gets pos[i]$\;
        $\mu'\gets Demote(\mu,i,up, down) $\;
        \While{$k>down-1$ AND $|\mu'(down)|\leq b_j$}{
            \eIf{$u_{k\, down}>\mathcal{L}[t] $}{
        $\mu \gets Demote(\mu',k,up, down)$\;
        $i\gets k$\;

        \textbf{break}\;
       }{
      \eIf{$v_{ij}<v_{j}(\mu)$}{
        $down\gets jdown-1$\;
        \textbf{break}\;
      }{
        //We have another tie and send $\A_k$ to $\B_j$ tentatively\;
        $\mu'\gets Demote(\mu',k,up, down)$\;
        $k\gets k-1$, and $t\gets t+1$\;
      }
      
      }
        }
         \If{$k=down-1$ AND $\mu\neq \mu'$}{$down\gets down-1$\;}
        
      }
      }
     }
    
    $\mathcal{F}\gets \{\A_i,\cdots,\,\A_n\}\cup \{\B_{j+1},\cdots \B_m\}$\;
    $Update(\mathcal{L},\mu,pos)$\;
    $i\gets i-1$\;
    
    \If{$(|\mu(\B_{up})|=1$ OR $down=up)$ AND $(up>1)$}{
        $up\gets \max \{j< up\,n |\,\mu(\B_j)|>1\}$
    }
    }
    }
     \caption{CapFaSt}
    \label{alg:fastcap}
    \end{algorithm}
\newpage
\subsection{General Ranked Valuations}
We now translate FaSt-Gen for capacity constrained settings. We again start with the \nameA optimal complete stable matching. This is defined in the preprocessing routine given in Algorithm \ref{alg:cappreproc}. This matches as many \nameAs as possible to $\B_1$ then if there are more than $m-1$ \nameAs unmatched, as many as possible to $\B_2$ and so on. Note that the fixing carried out after defining the matching is not necessary for the correctness of the algorithm, and the algorithm would still continue to correctly compute the leximin optimal stable matching if we did not perform this. In the main algorithm in Algorithm \ref{alg:capleft}, we essentially run the main while loop in FaSt-Gen multiple times. We must do this because a \nameB that is at full capacity may have given out some \nameAs after having the upper limit fixed. In such a case we may see a leximin increase by adding more \nameAs to this \nameB. 

As a result, in Algorithm \ref{alg:capleft}, we unfix the upper limit of all such \nameBs who after having been at full capacity, gave out some \nameAs, till such time that there are no longer any such \nameBs. Since we never add more \nameAs to a \nameB at full capacity (due to the if condition in line \ref{eq:b} of Algorithm \ref{alg:capleft}), $\mu$ continues to be a valid matching.
Now for the look ahead routine detailed in Algorithm \ref{alg:capgenlook}, the only \nameB to which more \nameAs are added is $\B_{down}$ and the condition of the while loop ensures that it is never overfilled. The correctness of the capacitated version of the algorithm follows from the correctness of FaSt-Gen.

One point that needs scrutiny is the running time. While it may appear that the running time may become uncontrolled, note that a particular \nameA and \nameB pair are only ever considered in one particular execution of the first while loop, in which case they can be consider up to $m-1$ times. The comparison of a leximin tuple also continues to take $O(n)$ time. As a result, the running time continues to be $O(m^2n^2)$.  

\begin{algorithm}[!ht]{\small
    \KwIn{Instance of general ranked valuations $\langle \symbA,\symbB,U,V,B\rangle$}
    \KwOut{$\mu$, $UpperFix,\,LowerFix,\,Unfixed$}
    //Initiate a stable matching with the \nameA optimal complete stable matching: \; 
    $j\gets 1$\;
    $p\gets n-m+1$\; $t\gets 0$\;
    \While{$j\leq m$}{
    $i\gets \min \{b_j,p\}$\;
    $\mu(\B_j)\gets \{\A_{t+1},\cdots, \A_{t+i}\}$\;
    $t\gets t+i$\;
    \eIf{$i<p$}{
        $p\gets (p-i)+1$\;
    }{
        $p=1$\;
    }
    $j\gets j+1$\; 
    }
    
    $UpperFix \gets \{\B_1\},\,\, LowerFix \gets \{\B_m\}$\;
    
    Set $\A_i$ to be the highest ranked \nameA s.t. $u_i(\mu)\leq v_j(\mu)$ for all $j\in [m]$, $|\mu(\mu(\A_i))|=1$ and $i\neq 1$ \;
    \If{$i<n$}{$\B_j\gets \mu(\A_i)$\;
    $UpperFix\gets UpperFix \cup \{\B_j,\cdots,\,\B_m\}$\;
    $LowerFix\gets LowerFix \cup \{\B_j,\cdots,\,\B_m\}$\;
    }
    
    $SoftFix \gets \emptyset$\; $Unfixed \gets UpperFix^c$\;
}
\caption{Preprocessing}\label{alg:cappreproc}
\end{algorithm}    

\newpage
\begin{algorithm}[!ht]{\small
    \KwIn{Instance of general ranked valuations with $\langle \symbA,\symbB,U,V, B\rangle$}
    \KwOut{$\mu$}
    $\langle \mu,\, UpperFix,\,LowerFix,\, Unfixed\rangle \gets Preprocessing(\langle\symbA,\symbB,U,V,B\rangle)$\;
    $Flag\gets True$\;
    \While{Flag}{
    Set $T[j]\gets 0$ for all $j \in [m]$\;
    \While{$|UpperFix \backslash LowerFix| + |LowerFix|<m$}{
        $up\gets \min_{j\notin LowerFix } j$\;
        $down \gets \argmin_{j\in Unfixed} v_j(\mu)$.\;
        $SoftFix \gets SoftFix \backslash \{(j,j')| j'\leq up < j \}$\;
        
        \eIf{$|\mu(\B_{up})|=1$ OR $\,v_{up}(\mu)\leq v_{down}(\mu)$}{
            $LowerFix\gets LowerFix \cup \{\B_{up}\}$\;
        }
        {   
        \eIf{$|\mu(\B_{down})|=b_{down}$} 
        { 
            $UpperFix\gets\, UpperFix\cup \{\B_{down}\}$ \label{eq:b}
        }
        {
            $\mu' \gets Demote(\mu,down,up)$\;
            \eIf{$\mathcal{L}_{\mu'}\geq\mathcal{L}_{\mu}$}
            {
                \If{$|\mu(\B_{up})|=b_{up}$}{
                    $T[up]=1$\;
                }
                $\mu \gets \mu'$\;
            }
            {//Decrease in leximin value, need to check the source of the decrease\;
            \eIf{$sourceDec(\mu',\mu)=\B_{up}$}
            { 
                $LowerFix\gets LowerFix \cup \{\B_{up}\}$\;
                $UpperFix\gets UpperFix \cup \{\B_{up+1}\}$
            }{
                \eIf{$sourceDec(\mu',\mu)\in \symbA$}
                {
                    $\B_t\gets \mu(sourceDec(\mu',\mu))$\;
                    $LowerFix\gets LowerFix \cup \{\B_t\}$\;
                    $UpperFix\gets UpperFix \cup \{\B_{t+1}\}$\;
                    $A\gets \{j| j>t+1, j\in Unfixed \}$\;
                    $SoftFix \gets SoftFix \cup (A\times \{t+1\})$\;
                }
                {//The source of the decrease is a \nameB which is still unfixed. Need to check if this may lead to an eventual leximin increase
                    $(\mu,LowerFix,UpperFix,SoftFix) \gets LookaheadRoutine(\mu,down$, $LowerFix$, $UpperFix$, $ SoftFix)$\;
                }
            }
            
            }
        }}
        $Unfixed \gets \{j|j\notin UpperFix $ or $ (j,j')\notin SoftFix$ for some $j'>j\}$\;
        
    }}
    \eIf{$t[j]=0$ for all $j\in [m]$}{
        $Flag\gets False$\;
    }
    {
        Set $j$ to be the highest ranked \nameB such that $T[j]=1$\;
        $UpperFix \gets \{\B_1,\cdots \B_{j-1}\}$\;
        $LowerFix \gets \{\B_m\}$\;
    }
    }
    \caption{CapFaSt-Gen}\label{alg:capleft}
    \end{algorithm}
\newpage    

\begin{algorithm}[!ht]{\small
    \KwIn{$I,\mu,down,LowerFix,UpperFix,SoftFix$}
    \KwOut{$\mu,LowerFix,UpperFix,SoftFix$}
    $\langle \mu', LF, UF \rangle \gets \langle \mu,\, LowerFix,\, UpperFix\rangle $\;
    \While{($|LF|+|UF\backslash LF|<m$) AND ($|\mu(\B_{down})<b_{down}|$)}{
        $up\gets \min_{j\notin LowerFix } j$\;
        \eIf{$|\mu(\B_{up})|=1$ OR $v_{up}(\mu)\leq v_{down}(\mu)$}{
            $LF\gets LF \cup \{\B_{up}\}$\;
        } 
        {
            $\mu' \gets Demote(\mu',up,down)$\;
            \eIf{$\mathcal{L}(\mu')\geq\mathcal{L}(\mu)$}
            {
                $\mu \gets \mu'$\;
                $LoweFix\gets LF$, 
                $UpperFix \gets UF$\;
                \textbf{break}\;
            }
            {//Decrease in leximin value, need to check the source of the decrease\;
            \eIf{$sourceDec(\mu',\mu)=\B_{up}$}
            { 
                $LF\gets LF \cup \{\B_{up}\}$, $UF \gets UF \cup \{\B_{up+1}\}$\;
            }{
                \If{$sourceDec(\mu',\mu)\in \symbA$}
                {
                    $\B_t\gets \mu'(sourceDec(\mu',\mu))$\;
                    \eIf{$t=down$}{
                        //Cannot increase leximin value due to $\B_{down}$\;
                        $UpperFix\gets UpperFix \cup \B_{down}$\;
                    }
                    {
                         $SoftFix \gets SoftFix \cup (down,t)$\;
                    }
                    \textbf{break}\;
                    
                }
                
            }
            
            }
            
            }
    }}
    \caption{Look ahead Routine}\label{alg:capgenlook}
    \end{algorithm}
\end{document}